\newtheorem{theorem}{Theorem}[section]
\newtheorem{proposition}[theorem]{Proposition}
\newtheorem{lemma}[theorem]{Lemma}
\newtheorem{example}[theorem]{Example}
\newtheorem{remark}[theorem]{Remark}
\def\Tr{\mathrm{Tr}}
\def\id{{\bf 1}\!\!{\rm I}}
\title{Matrix Product States as Observations   of Entangled Hidden Markov Models  }
\begin{document}

\maketitle

\centerline{ \author{\Large $^{1,2}$Abdessatar Souissi}}
\vskip0.25cm
\centerline{$^1$ Department of Management Information Systems, College of Business and Economics, }
\centerline{Qassim University, Buraydah 51452, Saudi Arabia}
\centerline{$^2$ Department of Mathematics, Preparatory Institute for Engineering Studies,}
\centerline{ University of Monastir,  Monastir 5000, Tunisia}

\centerline{\textit{a.souaissi@qu.edu.sa}}

\tableofcontents

\begin{abstract}
This paper reveals the intrinsic structure of Matrix Product States (MPS) by establishing their deep connection to entangled hidden Markov models (EHMMs). It is demonstrated that a significant class of MPS can be derived as the outcomes of EHMMs, showcasing their underlying quantum correlations. Additionally, a lower bound is derived for the relative entropy between the EHMM-observation process and the corresponding MPS, providing a quantitative measure of their informational divergence. Conversely, it is shown that every MPS is naturally associated with an EHMM, further highlighting the interplay between these frameworks. These results are supported by illustrative examples from quantum information, emphasizing their importance in understanding entanglement, quantum correlations, and tensor network representations.
\end{abstract}

\textbf{Keywords:} Information processing; Entanglement;  Hidden Markov Models; Matrix Product states; Quantum theory

\section{Introduction}\label{sect_prel}
In quantum information, quantum channels describe how quantum states are transmitted and transformed. Mathematically, they are represented as completely positive trace-preserving (CPTP) maps.   In finite-dimensional settings, these maps are often expressed using the Kraus decomposition \cite{Kraus71}
\[
\Phi(X) = \sum_{j} K_j X K_j^{\dagger}
\]
where the operators \(B_k\) satisfy the isometry condition
\[
\sum_{j} K_j^{\dagger} K_j = \id
\]
Quantum transition expectations are completely positive and identity-preserving (CPIP) maps between C$^*$-algebras. These maps give rise to Markov operators, which offer a dual perspective on quantum channels, with their associated Kraus matrices \( K_j \) satisfying the  following gauge condition
\begin{equation}\label{cpip}
\sum_{j} K_j K_j^{\dagger} = \id
\end{equation}
The algebraic nature of these transition expectations  provides a rigorous framework for understanding the connection between quantum dynamics and their underlying mathematical properties. This framework forms the foundation of Quantum Markov Chains (QMCs), introduced by Accardi in 1974 \cite{Acc75}. QMCs are  states defined on infinite tensor products of matrix algebras, offering a powerful probabilistic tool to analyze correlations and dynamics in quantum systems within the context of quantum information theory. The development of QMCs was motivated by the pursuit of a quantum analogue to Dobrushin's theory of Markov fields \cite{D68} and was further influenced by Araki's seminal work on quantum lattice systems \cite{Araki71, Araki69}. Over the years, QMCs have become a precise and robust tool for modeling the dynamics of one-dimensional quantum systems and advancing quantum information processing \cite{Ib08,   G08, AF83, AcSouElG20, FR15}.  Hidden Quantum Markov Models (HQMMs) \cite{AGLSclass24, AGLSQuan24, MonrWiesn11} provide a framework for modeling quantum systems with underlying  QMCs, offering insights into quantum stochastic processes.

Finitely correlated states (FCS), introduced by Fannes, Nachtergaele, and Werner \cite{FNW1, FNW2, FNW3}, represent a distinguished class of QMCs that have proven instrumental in modeling the ground states of valence bond solid systems, including those described by the AKLT framework \cite{AKLT}. Closely related to FCS, matrix product states (MPS) provide a compact and efficient representation of quantum states, offering profound insights into their structure \cite{perez2007matrix, Mug2012}.
 A MPS  with periodic boundary condition is expressed through a sequence of matrices \(\{A_k\}\),  each corresponding to a local basis state \(|k\rangle\)  of a Hilbert space $\mathcal{K}$  and acting on an auxiliary (bond) space $\mathcal{H}$ of dimension \(m\) and it can be expressed as
\[
|\psi_N\rangle = \sum_{k_1, k_2, \dots, k_N} \mathrm{Tr}\left( A_{k_1}^{[1]} A_{k_2}^{[2]} \cdots A_{k_N}^{[N]} \right) |k_1 k_2 \dots k_N\rangle\in\mathcal{K}^{\otimes N}
\]
Matrix Product States (MPS) are pivotal in determining ground states of one-dimensional quantum many-body systems, particularly through the Density Matrix Renormalization Group (DMRG) algorithm \cite{schollwock2011density, scho05}. Despite their extensive applications in physics and quantum information \cite{perez2007matrix, verstraete2008matrix, orus2014practical, VC06}, the probabilistic interpretations of MPS require further investigation, especially regarding their connections to quantum processes.  Notably, the work in \cite{Mov22} employs an ergodic quantum channels approach to analyze the thermodynamic limit for a specific class of MPS, offering further insights into their structural and dynamical properties.

This paper establishes a direct connection between MPS and Entangled Hidden Markov Models (EHMMs) \cite{SS23}, a special class of HQMMs. EHMMs describe local correlations within a bipartite quantum system $\mathcal{H} \otimes \mathcal{K}$, where the hidden system, represented by a Hilbert space $\mathcal{H}$, evolves according to an entangled Markov chain (EMC) \cite{AF05, AOM06, SSB23}, and the observation system is governed by a Hilbert space $\mathcal{K}$. The correlation between the hidden EMC and the observation process is captured by a conditional independence relation introduced in \cite{AGLSQuan24}, which extends classical hidden Markov models (HMMs) into the quantum domain. This perspective highlights how HQMMs  can naturally lead us to  MPS and vice-versa, enriching the study of quantum systems and their dynamics.

The primary finding of this study Theorem \ref{thm_main1} establishes a significant connection between Matrix Product States (MPS) with periodic boundary conditions (PBC) and entangled hidden Markov models (EHMMs). A key result demonstrates that certain PBC MPS $|\psi_N\rangle \in \mathcal{K}^{\otimes N}$ can be rigorously interpreted as partial observations of an EHMM, represented by the state vectors $\big|\Psi_{H,O;n}\big\rangle \in \mathcal{H}^{\otimes (n+1)} \otimes \mathcal{K}^{\otimes n}$, via partial measurement on a specific  state vector $E_{N,n}$. Explicitly, this relation is expressed as:
\[
\big\langle \Psi_{H,O;n} \mid E_{N,n} \big\rangle_{\mathcal{H}^{\otimes (n+1)} \otimes \mathcal{K}^{\otimes (n-N)}} = |\psi_N\rangle.
\]

This result is framed using the Schrödinger picture for EHMMs, which is shown to be equivalent to the algebraic approach described in \cite{AGLSQuan24, SS23}. In this context, EHMMs are defined as states $\varphi_{H,O}$ on the infinite system
\[
\bigotimes_{n \in \mathbb{N}} \big(\mathcal{B}(\mathcal{H}) \otimes \mathcal{B}(\mathcal{K})\big),
\]
ensuring the existence of a thermodynamic limit:
\[
\varphi_{H,O}(Z) = \lim_{n \to \infty} \big\langle \Psi_{H,O;n} \big| Z \big| \Psi_{H,O;n} \big\rangle.
\]

This thermodynamic limit underscores the presence of FCS for a broad class of Hamiltonians, providing a solid   framework for the MPS under consideration. Additionally, the   sequence of states, \( \varphi_N(\cdot) = \langle \psi_N | \cdot | \psi_N \rangle \), defined on the observable algebra \( \bigotimes_{\mathbb{N}} \mathcal{B}(\mathcal{K}) \), demonstrates profound connections to the entangled Markov chain (EMC) on the hidden algebra \( \bigotimes_{\mathbb{N}} \mathcal{B}(\mathcal{H}) \), which is associated with the auxiliary space of the MPS.

Theorem \ref{thm_main2}, the second main result, establishes that for any Matrix Product State (MPS) whose tensors satisfy the gauge condition (\ref{cpip}), there exists a rigorous and canonical method to construct an associated Entangled Hidden Markov Model (EHMM). This result encompasses a broad class of MPS, including numerous well-established examples in the literature. The class of EHMMs emerging from this construction raises important questions regarding their stochastic equivalence, degrees of entanglement, and their potential equivalence to the EHMMs defined in Theorem \ref{thm_main1}. As a measure of distinguishability between the MPS \(|\psi_N\rangle\) and the observation state \(|\Psi_{O;N}\rangle\), we derive in Theorem \ref{thm_entropyInequality} a lower bound for the relative entropy \cite{Araki75,  U62, OP} of their density matrices
\[
S(\rho_N \| \rho_{O;N}) = \Tr\big(\rho_{N}(\log(\rho_N) - \log(\rho_{O;N}))\big)
\]
 This result quantifies the difference between the two states and paves the way for systematically compressing various measures of entanglement, such as negativity \cite{VW2002} and the degree of entanglement \cite{Ohya} for these states. This topic will be explored in future research.
The results are illustrated through prominent examples from well-known MPS, such as the GHZ,  cluster and AKLT states. These examples highlight the connection between the formalisms of EHMMs and MPS, offering valuable insights into their interplay and potential applications in  quantum information and computing.

\section{Schrödinger Picture of EHMMs}\label{sect-EHMMs}
  Let \( \mathcal{H} \) be a Hilbert space of dimension \( m \), equipped with an orthonormal basis \( e= \{e_i \mid i \in I_H\} \), where the index set is \( I_H = \{1, 2, \ldots, m\} \). The associated space of bounded linear operators on \( \mathcal{H} \), denoted \( \mathcal{B}_H = \mathcal{B}(\mathcal{H}) \), is defined with the identity operator \( \id_H \) and operator norm \( \|\cdot\| \), it represents the hidden algera.

Similarly, let \( \mathcal{K} \) be a \( d \)-dimensional Hilbert space with orthonormal basis \(f = \{|k\rangle \mid k \in I_O\} \), where \( I_O = \{1, 2, \ldots, d\} \). The space of bounded linear operators on \( \mathcal{K} \) is denoted by \( \mathcal{B}_O = \mathcal{B}(\mathcal{K}) \) with identity \(\id_O\), it represents the observation algebra.
A map $\mathcal{E}$ between two C$^*$-algebrass, is completely positive if for every $n$ the map $\mathcal{E}\otimes id_n$ is positive.
A \textbf{quantum channel} is a completely positive trace-preserving (CPTP) map \( \Phi \) that acts on a density matrix \( \rho \) as:
\[
\Phi(\rho) = \sum_i K_i \rho K_i^\dagger
\]
where \( K_i \) are Kraus operators satisfying the isometry condition:
\[
\sum_i K_i^\dagger K_i = \id
\]
This map describes the evolution of quantum states in an open system, preserving the trace and positivity of the density matrix.

A \textbf{quasi-conditional expectation} with respect to a triplet of C$^*$-algebras algebras \( \mathcal{C} \subseteq \mathcal{B} \subseteq \mathcal{A} \) is a completely positive, identity-preserving (CPIP) linear map \( E: \mathcal{A} \to \mathcal{B} \) satisfying the condition:
\begin{equation}\label{eq_QCE}
E(c\, a) = c\,E(a), \quad \forall c \in \mathcal{C}, \; \forall a \in \mathcal{A},
\end{equation}
where \( \mathcal{C} \) is a central subalgebra.

In the context of  QMCs \cite{Acc75, AcSouElG20}, a typical scenario involves \( \mathcal{A} = \mathcal{A}_{\text{past}} \otimes \mathcal{A}_{\text{present}} \otimes \mathcal{A}_{\text{future}} \). Here, a quasi-conditional expectation with respect to the triplet \( \mathcal{A}_{\text{past}} \subseteq \mathcal{A}_{\text{past}} \otimes \mathcal{A}_{\text{present}} \subseteq \mathcal{A} \) can be expressed as \( E = \text{id}_{\mathcal{A}_{\text{past}}} \otimes \mathcal{E} \), where \( \mathcal{E} \) is a CPIP map from \( \mathcal{A}_{\text{present}} \otimes \mathcal{A}_{\text{future}} \) to \( \mathcal{A}_{\text{present}} \), known as the \textbf{transition expectation}. This map encapsulates the Markovian structure of \( E \).


Recall that, the \textbf{Schur product}, denoted by \( X \diamond X' \), is defined as the element-wise product of two operators \( X = \sum_{i,j} x_{ij} e_ie_j^\dagger \) and \( X' = \sum_{i,j} x'_{ij} e_ie_j^\dagger \) in $\mathcal{B}(\mathcal{H})$, and it is expressed as:
\begin{equation}\label{tensor-schur}
    X \diamond X' := \sum_{i,j} x_{ij} x'_{ij} \,  e_ie_j^\dagger \in \mathcal{B}(\mathcal{H})
\end{equation}
For each $n$, we define the operator \( V_{H}^{[n]} : \mathcal{H} \to \mathcal{H} \otimes \mathcal{H} \) by the linear extension of:
\begin{equation}\label{VH}
  V^{[n]}_{H} e_i = \sum_{j} U^{[n]}_{ij} e_i\otimes e_j
\end{equation}
where \( U^{[n]}_{ij} \) satisfies \( \Pi^{[n]}_{ij} = |U^{[n]}_{ij}|^2 \). It is straightforward to verify that \( V^{[n]\,{\dagger}}_{H} V^{[n]}_{H} = id_{\mathcal{H}} \), ensuring that \( V^{[n]}_{H} \) is an isometry. The map \( \mathcal{E}_{H;n} \equiv V^{[n]\, \dagger}_{H} (\cdot) V^{[n]}_{H} \) defines a transition expectation from \( \mathcal{B}_{H;n} \otimes \mathcal{B}_{H;n+1} \) into \( \mathcal{B}_{H;n} \), which is explicitly given on localized observable $X_n = \sum_{i,j}X_{n;ij} e_ie_j^{\dagger}\in\mathcal{B}_{H; n}$, $X_{n+1} = \sum_{i,j}\, X_{n+1;ij}e_ie_j^{\dagger} \in \mathcal{B}_{H; n+1}$   by:
$$
\mathcal{E}_{H;n}(X_n\otimes X_{n+1}) = \sum_{i,j}\sum_{k,l}\overline{U^{[n]}_{ik}}\, U^{[n]}_{jl}\, X_{n;ij}X_{n+1;kl} e_ie_j^{\dagger}
$$
 Its dual map, \( \mathcal{E}_{H;n}^{\dagger}\equiv V^{[n]}_{H} (\cdot) V^{[n]\, \dagger}_{H} \), acts as a quantum channel from \( \mathcal{B}_{H;n}^{\dagger} \) to \( (\mathcal{B}_{H;n} \otimes \mathcal{B}_{H;n+1})^{\dagger} \). Similarly, for each \( m \times d \) stochastic matrix \( Q^{[n]} = \big(Q^{[n]}_{i}(k)\big) \), we define the operator \( V_{O}^{[n]} : \mathcal{H} \to \mathcal{H}\otimes \mathcal{K} \) by the linear extension of:
\begin{equation}\label{VOn}
 V^{[n]}_{O}(e_{i}) = \sum_{k} \chi^{[n]}_{i}(k)\, e_{i} \otimes |k\rangle
\end{equation}
where \( \chi^{[n]}_{i}(k) \) satisfies \( Q^{[n]}_{i}(k) = |\chi^{[n]}_{i}(k)|^2 \). It is straightforward to verify that \( V^{[n]\,{\dagger}}_{O} V^{[n]}_{O} = id_{\mathcal{H}} \), ensuring that \( V^{[n]}_{O} \) is a partial isometry.

The map \( \mathcal{E}_{O;n} \equiv V^{[n]\, \dagger}_{O} (\cdot) V^{[n]}_{O} \) defines an emission expectation from \( \mathcal{B}_{H;n} \otimes \mathcal{B}_{O;n} \) into \( \mathcal{B}_{H;n} \) with explicit expression on localized observable $X_n = \sum_{i,j}X_{n;ij} e_ie_j^{\dagger}\in\mathcal{B}_{H;n}$, $Y_{n} = \sum_{i,j}\, X_{n;ij} |i\rangle\langle j| \in \mathcal{B}_{O; n}$   by:

$$
\mathcal{E}_{H, O;n}(X_n\otimes Y_{n}) = \sum_{i,j}\sum_{k,l}\overline{\chi^{[n]}_{i}(k)}\, \chi^{[n]}_{j}(l)\, X_{n;ij}Y_{n;kl} e_ie_j^{\dagger}
$$
 Its dual map, \( \mathcal{E}_{O;n}^{\dagger}  \equiv V^{[n]}_{O} (\cdot) V^{[n]\, \dagger}_{O}\), acts as a quantum channel from \( \mathcal{B}_{H;n}^{\dagger} \) to \( (\mathcal{B}_{H;n} \otimes \mathcal{B}_{O;n})^{\dagger} \).
Given an initial state \(\varphi_1\) on the first hidden algebra \(\mathcal{B}_{H; 1}\), such as \(\varphi_1(X) = \Tr(\rho_\pi X)\) with \(\rho_1 = \sum_i \pi_i e_i e_i^\dagger\), the triplet \((\varphi_1, (\mathcal{E}_{H;n})_n, (\mathcal{E}_{H,O;n})_n)\) defines an EQMM $\varphi_{H,O}$ within the infinite tensor product algebra
\(
\mathcal{B}_{H,O} = \bigotimes_{n \geq 1} (\mathcal{B}_{H,n} \otimes \mathcal{B}_{O,n})
\)
as described in \cite{AGLSQuan24, SS23}. Here, \(\mathcal{B}_{H,n}\) and \(\mathcal{B}_{O,n}\) are isomorphic copies of the hidden algebra \(\mathcal{B}_H\) and observation algebra \(\mathcal{B}_O\), respectively.
The information  of the correlations of the EHMM $\varphi_{H,O}$  is summarized by the partial isometries (\ref{VH}) and (\ref{VOn}) the following superposition states of the form
\begin{equation}\label{PsiHOn}
 \big| \Psi_{H,O;n}\big\rangle = \sum_{i_1,\dots, i_{n+1}}\sum_{k_1,\dots, k_n} \sqrt{\pi_{i_1}} U^{[1]}_{i_1i_2}\cdots U^{[n]}_{i_ni_{n+1}}\, \chi_{i_1}(k_1)\cdots \chi_{i_n}(k_n)e_{i_1\dots, i_{n+1}} \otimes |k_1\cdots k_n\rangle
\end{equation}
One can see that $\big| \Psi_{H,O;n}\big\rangle$ belongs to the tensor Hilbert $\mathcal{H}^{\otimes(n+1)}\otimes \mathcal{K}^{\otimes n}$.
\begin{theorem}
  In the above notations. The vector $\big| \Psi_{H,O;n}\big\rangle$ is a unit vector and for every local observable $Z\in\mathcal{B}_{H,O; loc}$ one has
  \begin{equation}\label{eq_phipsi}
    \varphi_{H,O}(Z) = \lim_{n\to\infty}\, \big\langle  \Psi_{H,O;n} \big|Z\otimes \id\big| \Psi_{H,O;n}\big\rangle
  \end{equation}
\end{theorem}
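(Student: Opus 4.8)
The plan is to realize each finite-volume vector $\big|\Psi_{H,O;n}\big\rangle$ as the image of a single fixed unit vector under one global isometry, so that both assertions reduce to bookkeeping with isometries. Concretely, I would set $\xi_1 = \sum_{i}\sqrt{\pi_i}\,e_i \in \mathcal{H}$, which is a unit vector because $\sum_i \pi_i = \Tr(\rho_1) = 1$, and introduce for each $l$ the combined emission--transition isometry $W^{[l]} : \mathcal{H} \to \mathcal{H}\otimes\mathcal{K}\otimes\mathcal{H}$ given by
\[
W^{[l]} e_i = \sum_{k,j} \chi^{[l]}_i(k)\,U^{[l]}_{ij}\, e_i \otimes |k\rangle \otimes e_j .
\]
Using $\sum_k |\chi^{[l]}_i(k)|^2 = \sum_k Q^{[l]}_i(k) = 1$ and $\sum_j |U^{[l]}_{ij}|^2 = \sum_j \Pi^{[l]}_{ij} = 1$, together with the orthogonality of the leading factor $e_i$, one checks directly that $W^{[l]\dagger}W^{[l]} = \id_{\mathcal{H}}$. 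Applying $W^{[1]},\dots,W^{[n]}$ successively to the freshly created hidden slot and reordering the tensor legs (all $\mathcal{H}$-legs to the front, all $\mathcal{K}$-legs to the back) reproduces exactly the amplitudes of (\ref{PsiHOn}), so $\big|\Psi_{H,O;n}\big\rangle = \Theta_n\,\xi_1$ for a composition $\Theta_n$ of isometries and a reordering unitary. Since a composition of isometries is an isometry, $\big\||\Psi_{H,O;n}\rangle\big\| = \|\xi_1\| = 1$, which settles the first claim with no further computation.

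For the second claim I would fix a local observable $Z$ supported on the first $N$ sites of $\mathcal{B}_{H,O}$ and take $n \geq N$. Writing the vector expectation as a sandwich,
\[
\big\langle \Psi_{H,O;n}\big| Z\otimes\id\big|\Psi_{H,O;n}\big\rangle = \big\langle \xi_1 \big| W^{[1]\dagger}\cdots W^{[n]\dagger}\,(Z\otimes\id)\,W^{[n]}\cdots W^{[1]}\big|\xi_1\big\rangle ,
\]
I would peel off the outer layers one at a time. For every $l \geq N+1$ the isometry $W^{[l]}$ acts on the hidden slot $l$ and creates the observation slot $l$ and the hidden slot $l+1$, none of which lies in the support of $Z$; hence $Z\otimes\id$ commutes through $W^{[l]}$ and the relation $W^{[l]\dagger}W^{[l]} = \id$ collapses that layer. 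Iterating from $l = n$ down to $l = N+1$ shows that the vector expectation is in fact independent of $n$ for all $n \geq N$ and equals the truncated sandwich $\big\langle \xi_1 \big| W^{[1]\dagger}\cdots W^{[N]\dagger}(Z\otimes\id_{H;N+1})W^{[N]}\cdots W^{[1]}\big|\xi_1\big\rangle$. In particular the sequence appearing in (\ref{eq_phipsi}) is eventually constant, so its limit exists trivially.

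It then remains to identify this stabilized value with $\varphi_{H,O}(Z)$. Here I would observe that conjugation by $W^{[l]}$ implements precisely the composition of the emission expectation $\mathcal{E}_{H,O;l}$ and the transition expectation $\mathcal{E}_{H;l}$: contracting the adjoint pair over the created observation and hidden legs reproduces the Kraus form $V^{[l]\dagger}_O V^{[l]\dagger}_H(\cdot)V^{[l]}_H V^{[l]}_O$ displayed before the theorem. Unwinding the nested sandwich from the outside in therefore realizes $Z \mapsto \varphi_1\big(\mathcal{E}_{H;1}\circ\mathcal{E}_{H,O;1}\circ\cdots\big)$, which is exactly the algebraic definition of the EQMM $\varphi_{H,O}$ generated by $(\varphi_1,(\mathcal{E}_{H;n})_n,(\mathcal{E}_{H,O;n})_n)$ in \cite{AGLSQuan24, SS23}. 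I expect the main obstacle to be precisely this last step, namely matching the Schr\"odinger-picture sandwich with the algebraic (Heisenberg-picture) definition of $\varphi_{H,O}$: one must keep careful track of the reordering of hidden and observation tensor legs, the placement of the complex conjugates coming from the $U^{[l]}$ and $\chi^{[l]}$ amplitudes, and the fact that the two isometries $V^{[l]}_H$ and $V^{[l]}_O$ act on the same hidden site in a prescribed order. Once these conventions are aligned, the equivalence of the two pictures, and hence the stated identity (\ref{eq_phipsi}), follows.
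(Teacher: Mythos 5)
Your first two steps are correct and complete, and in fact more detailed than the paper itself, whose entire proof of this theorem is a citation to \cite{SS23}. Realizing $\big|\Psi_{H,O;n}\big\rangle=\Theta_n\,\xi_1$, where $\Theta_n$ composes the layer isometries $\id\otimes W^{[l]}$ with a leg-reordering unitary, does prove normalization; and since a local $Z$ supported in $[1,N]$ acts on legs disjoint from those touched by $W^{[l]}$ for $l\geq N+1$, the relation $W^{[l]\dagger}W^{[l]}=\id$ correctly collapses those layers, so the expectation values stabilize for $n\geq N$ and the limit in (\ref{eq_phipsi}) exists trivially. Your identification of the conjugation with the composed transition expectations is also right: a direct computation gives $W^{[l]\dagger}\big(X_l\otimes Y_l\otimes X_{l+1}\big)W^{[l]}=\mathcal{E}_{H,O;l}\big(\mathcal{E}_{H;l}(X_l\otimes X_{l+1})\otimes Y_l\big)$, matching the paper's explicit formulas for $\mathcal{E}_{H;l}$ and $\mathcal{E}_{H,O;l}$.

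The gap is the step you yourself flag as ``the main obstacle,'' and it is not just a matter of aligning tensor-leg conventions: it hides a genuine discrepancy between two initial functionals. Your unwinding terminates in the \emph{pure vector state} $\omega_{\xi_1}(T)=\langle\xi_1|T|\xi_1\rangle$ with $\xi_1=\sum_i\sqrt{\pi_i}\,e_i$, applied to the nested operator $T=\mathcal{E}^{(1)}\big(X_1\otimes Y_1\otimes\mathcal{E}^{(2)}(\cdots\otimes\id)\big)$, whereas the paper's triplet uses the \emph{diagonal mixed state} $\varphi_1(X)=\Tr(\rho_1X)$, $\rho_1=\sum_i\pi_i e_ie_i^\dagger$. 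These agree on diagonal operators, and for $N=1$ the nested operator is indeed diagonal, because the leftover unitarity contraction $\sum_j\overline{U^{[1]}_{ij}}U^{[1]}_{i'j}=\delta_{ii'}$ forces it; so the identity holds at one site. But for $N\geq2$ the nested operator is generically non-diagonal: its $(i,i')$ entry carries the factor $\sum_j\overline{U^{[1]}_{ij}}U^{[1]}_{i'j}(T_2)_{jj}$, which is not proportional to $\delta_{ii'}$ once the inner operator $T_2$ is not a multiple of the identity, and then $\omega_{\xi_1}(T)\neq\Tr(\rho_1T)$ in general. Closing the argument therefore requires either invoking the precise definition of $\varphi_{H,O}$ from \cite{SS23} and verifying that its boundary functional is the one your sandwich actually produces (the vector state built from the amplitudes $\sqrt{\pi_{i_1}}$), or proving that the off-diagonal contributions cancel for the particular nested operators arising here. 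Neither is done in your proposal, and this identification is exactly the content that the paper delegates to \cite{SS23}; without it the chain of equalities ending in (\ref{eq_phipsi}) is not established.
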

\begin{proof}
  See \cite{SS23}.
\end{proof}
For two Hilbert spaces \( \mathcal{U} \) and \( \mathcal{V} \), equipped with the respective inner products \( \langle \cdot, \cdot \rangle_\mathcal{U} \) and \( \langle \cdot, \cdot \rangle_\mathcal{V} \), the partial inner product map relative to a fixed unit vector \( u_0 \in \mathcal{U} \) is defined as follows:
\[
\langle \cdot \mid u_0 \rangle_{\mathcal{U}} : \mathcal{U} \otimes \mathcal{V} \to \mathcal{V}
\]
where, for any finite-rank tensor \( w \in \mathcal{U} \otimes \mathcal{V} \), written in the form \( w = \sum_{i=1}^n u_i \otimes v_i \) with \( u_i \in \mathcal{U} \) and \( v_i \in \mathcal{V} \), the map is given explicitly by:
\begin{equation} \label{Tu0}
\langle w \mid u_0 \rangle_{\mathcal{U}} = \sum_{i=1}^n \langle u_i, u_0 \rangle_\mathcal{U} \, v_i\in \mathcal{V}
\end{equation}

This operation, known as the partial inner product with respect to \( u_0 \), yields a vector in \( \mathcal{V} \). Within the quantum mechanical framework, it can be interpreted as a vector-valued partial measurement on the composite quantum system \( \mathcal{U} \otimes \mathcal{V} \), conditioned on the fixed state \( u_0 \) in the subsystem \( \mathcal{U} \).

For \( \mathcal{U} = \mathcal{H}^{\otimes (n+1)} \) and \( \mathcal{V} = \mathcal{K}^{\otimes (n+1)} \),
the observation process can be derived by partially measuring the EHMM \( \big|\Psi_{H,O;n}\big\rangle \) on the underlying entangled Markov process described by the state vector
\begin{equation}\label{PsiHn}
\big|\Psi_{H;n}\big\rangle  = \sum_{j_1,\dots, j_{n+1}}  \sqrt{\pi_{j_1}} U^{[1]}_{j_1j_2}\cdots U^{[n]}_{j_nj_{n+1}}e_{j_1\dots, j_{n+1}}
\end{equation}

In fact, one has
\begin{align*}
   \big\langle\Psi_{H,O;n} |\Psi_{H;n}\big\rangle_{\mathcal{H}^{\otimes(n+1)}}  &  = \sum_{i_1,\dots, i_{n+1}}\sum_{k_1,\dots, k_n} \pi_{i_1} |U^{[1]}_{i_1i_2}|^2\cdots |U^{[n]}_{i_ni_{n+1}}|^2\, \chi_{i_1}(k_1)\cdots \chi_{i_n}(k_n)  |k_1\cdots k_n\rangle \\
   &  = \sum_{i_1,\dots, i_{n}}\sum_{k_1,\dots, k_n} \pi_{i_1} \Pi^{[1]}_{i_1i_2} \cdots \Pi^{[n]}_{i_{n-1}i_{n}}\, \chi_{i_1}(k_1)\cdots \chi_{i_n}(k_n)  |k_1\cdots k_n\rangle
\end{align*}

This leads to the observation process, which is described by the following:
\begin{equation}\label{PsiOn}
\big| \Psi_{O;n}\big\rangle = \sum_{i_1,\dots, i_{n}}\sum_{k_1,\dots, k_n} \pi_{i_1} \Pi^{[1]}_{i_1i_2} \cdots \Pi^{[n]}_{i_{n-1}i_{n}}\, \chi_{i_1}(k_1)\cdots \chi_{i_n}(k_n)  |k_1\cdots k_n\rangle \in \mathcal{K}^{\otimes n}
\end{equation}

Conversely, by measuring the EHMM \( \big| \Psi_{H,O;n}\big\rangle \) given the observation process (\ref{PsiOn}), one gets
\begin{align*}
  \big\langle\Psi_{H,O;n} |\Psi_{O;n}\big\rangle_{\mathcal{K}^{\otimes(n)}} &= \sum_{\substack{i_1,\dots,i_{n+1} \\ j_1,\dots,j_{n+1}}}\sum_{k_1,\dots,k_n} \sqrt{\pi_{i_1}} U^{[1]}_{i_1i_2}\cdots U^{[n]}_{i_ni_{n+1}} \pi_{j_1} \Pi^{[1]}_{j_1j_2} \cdots \Pi^{[n]}_{j_{n-1}j_{n}}\\
& \times \chi_{j_1}(k_1)\cdots \chi_{j_n}(k_n)\chi_{i_1}(k_1)\cdots \chi_{i_n}(k_n)e_{i_1\dots, i_{n+1}}  \in \mathcal{H}^{\otimes(n+1)}
\end{align*}

One can observe that the expression of \( \big\langle\Psi_{H,O;n} |\Psi_{O;n}\big\rangle_{\mathcal{K}^{\otimes(n)}} \) is significantly different from the expression of the hidden Markov chain \( \big|\Psi_{H;n}\big\rangle \) given in (\ref{PsiHn}). This observation provides an intuitive insight into the entanglement of the state \( \big|\Psi_{H,O;n}\big\rangle \), as it is clearly not a product state. However, a precise measure of entanglement should be calculated to quantify its degree of entanglement, using measures such as the entanglement entropy and the negativity. These notions will be addressed in future work.

To ensure consistency with the notation used for MPS, we have shifted the initial system, often denoted as \(\mathcal{B}_{H,0}\) in the standard framework of QMCs and HQMMs, to \(\mathcal{B}_{H,1}\). This adjustment aligns the presentation with the conventions of MPS, providing a coherent and unified notation.

\section{Matrix Product States as  Observations of EHMMs}\label{sect-MPS-HMM}
In this section, we examine an  EHMMs characterized by the sequence of state vectors \(\big|\Psi_{H,O;n}\big\rangle\), as defined in (\ref{PsiHOn}). Alternatively, this model can be fully specified by the triplet \((\pi_1, (V_{H}^{[n]})_n, (V_{O}^{[n]})_n)\), where:
\begin{itemize}
    \item \(\pi_1 = \sum_{i} p_i e_i \) represents the initial distribution,
    \item \(V_{H}^{[n]}\) denotes the hidden partial isometries as given in (\ref{VH}), and
    \item \(V_{O}^{[n]}\) represents the observation partial isometries as defined in (\ref{VOn}).
\end{itemize}
We demonstrate that an important MPSs can be derived by applying a partial inner product operator to the sequence of state vectors \(\big|\Psi_{H,O;n}\big\rangle\).
Mainly we will treat the case of MPSs represented by  a quantum state \(|\psi_N\rangle\) in the observation Hilbert space \(\mathcal{K}^{\otimes N}\), where \(N\) is the number of subsystems. These states are expressed d as:
\begin{equation}\label{eq_psiN}
|\psi_N\rangle = \sum_{k_1, k_2, \ldots, k_N} \mathrm{Tr}(A_{k_1}^{[1]} A_{k_2}^{[2]} \cdots A_{k_N}^{[N]}) |k_1 k_2 \cdots k_N\rangle,
\end{equation}
where \(A_k^{[n]} = (a^{[n]}_{k;ij})\) are \(m \times m\) matrices acting on the observation  Hilbert space \(\mathcal{K}\) (so-called auxiliary in the language of MPS), \(k \in I_O\) indexes the local basis states \(\{|k\rangle\}\) and $n\ge 0$. For consistency of the states $|\psi_N\rangle$, the tensors \(A_k^{[n]}\) satisfy the following gauge condition:
\begin{equation}\label{eq_sumAkAkd1}
\sum_{k \in I_O} A^{[n]}_k A^{[n]\,\dagger}_k = \id_m
\end{equation}
 This approach establishes a direct connection between the EHMM framework and the MPS representation, highlighting the interplay between partial measurements and entanglement structure.

\begin{theorem}\label{thm_main1}
Let an EHMM be defined by the triplet \( (\pi_1, (V_{H}^{[n]})_n, (V_{O}^{[n]})_n) \), where \( \pi_1 = \sum_{i} \pi_i e_i   \) is an initial state, \( V_{H}^{[n]}: e_i \mapsto \sum_{j} U^{[n]}_{ij} e_i \otimes e_j \) are unistochastic hidden partial isometries, and \( V_{O}^{[n]}: e_i \mapsto \sum_{k} \chi^{[n]}_{i}(k) e_i \otimes |k\rangle \) are observation partial isometries.

For any \( N \in \mathbb{N} \), the MPS generated by matrices \( A_k^{[n]} = (a^{[n]}_{k; ij}) \) with entries
\begin{equation}\label{eq_aij}
a^{[n]}_{k; ij} = U^{[n]}_{ij} \chi^{[n]}_{i}(k)
\end{equation}
can be expressed as a partial measurement of the EHMM state \( |\Psi_{H,O;n}\rangle \) with respect to the vector \( E_{N,n} \in \mathcal{H}^{\otimes(N+1)} \otimes \mathcal{K}^{\otimes(n-N)} \):
\begin{equation}\label{eq_ENn}
E_{N,n} = \sum_{i_1,\dots, i_{n+1}} \sum_{k_{N+1}, \dots, k_n} \frac{1}{\sqrt{\pi_{i_1}}} \bigg( \prod_{\ell=N+1}^{n} U_{i_{\ell}i_{\ell+1}} \chi_{i_{\ell}}(k_\ell) \bigg) \delta_{i_{N+1}, i_1} e_{i_1,\dots,i_{n+1}} \otimes |k_{N+1}, \dots, k_n\rangle
\end{equation}
such that:
\begin{equation}\label{eq_PsiHOnENn}
\big\langle \Psi_{H,O;n} \mid E_{N,n} \big\rangle_{\mathcal{H}^{\otimes (n+1)} \otimes \mathcal{K}^{\otimes (n-N)}} = \sum_{k_1, \dots, k_N} \mathrm{Tr}(A_{k_1}^{[1]} \cdots A_{k_N}^{[N]}) \, |k_1 k_2 \cdots k_N\rangle, \qquad \forall n \geq N
\end{equation}
Additionally, the gauge condition (\ref{eq_sumAkAkd1}) is satisfied.
\end{theorem}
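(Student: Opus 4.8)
The plan is to prove (\ref{eq_PsiHOnENn}) by a direct coefficient-by-coefficient evaluation of the partial inner product (\ref{Tu0}), and to obtain the gauge condition (\ref{eq_sumAkAkd1}) as a short independent computation. First I would reorganize the ambient space of $\big|\Psi_{H,O;n}\big\rangle$, namely $\mathcal{H}^{\otimes(n+1)}\otimes\mathcal{K}^{\otimes n}$, by splitting the observation factors as $\mathcal{K}^{\otimes n}=\mathcal{K}^{\otimes N}\otimes\mathcal{K}^{\otimes(n-N)}$, where the first block carries the indices $k_1,\dots,k_N$ and the second the indices $k_{N+1},\dots,k_n$. Setting $\mathcal{U}=\mathcal{H}^{\otimes(n+1)}\otimes\mathcal{K}^{\otimes(n-N)}$ and $\mathcal{V}=\mathcal{K}^{\otimes N}$, the vector $E_{N,n}$ lives in $\mathcal{U}$, so that $\langle\Psi_{H,O;n}\mid E_{N,n}\rangle_{\mathcal{U}}$ is a well-defined element of $\mathcal{V}$. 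Expanding $\big|\Psi_{H,O;n}\big\rangle$ in the orthonormal basis $\{e_{i_1,\dots,i_{n+1}}\otimes|k_1\cdots k_n\rangle\}$ and applying (\ref{Tu0}), the coefficient of $|k_1\cdots k_N\rangle$ in the image is the sum over all environment indices $i_1,\dots,i_{n+1}$ and $k_{N+1},\dots,k_n$ of the product of the coefficient of $\big|\Psi_{H,O;n}\big\rangle$ with the complex conjugate of the matching coefficient of $E_{N,n}$; here one must note that the fixed vector $E_{N,n}$ carries the conjugation, exactly as in the computation of $\langle\Psi_{H,O;n}\mid\Psi_{H;n}\rangle$ preceding (\ref{PsiOn}).

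Carrying out this product, the prefactors $\sqrt{\pi_{i_1}}$ from $\big|\Psi_{H,O;n}\big\rangle$ and $1/\sqrt{\pi_{i_1}}$ from $E_{N,n}$ cancel. For each site $\ell$ in the range $N+1,\dots,n$ the factor $U^{[\ell]}_{i_\ell i_{\ell+1}}$ meets its conjugate $\overline{U^{[\ell]}_{i_\ell i_{\ell+1}}}$ and $\chi^{[\ell]}_{i_\ell}(k_\ell)$ meets $\overline{\chi^{[\ell]}_{i_\ell}(k_\ell)}$, producing $\Pi^{[\ell]}_{i_\ell i_{\ell+1}}=|U^{[\ell]}_{i_\ell i_{\ell+1}}|^2$ and $Q^{[\ell]}_{i_\ell}(k_\ell)=|\chi^{[\ell]}_{i_\ell}(k_\ell)|^2$. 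Summing over $k_{N+1},\dots,k_n$ and using the stochasticity $\sum_k Q^{[\ell]}_{i_\ell}(k)=1$ (which is exactly $V_O^{[\ell]\dagger}V_O^{[\ell]}=id_{\mathcal{H}}$) removes all the observation environment and leaves $\prod_{\ell=N+1}^{n}\Pi^{[\ell]}_{i_\ell i_{\ell+1}}$ multiplying $\prod_{\ell=1}^{N}U^{[\ell]}_{i_\ell i_{\ell+1}}\chi^{[\ell]}_{i_\ell}(k_\ell)$, together with the constraint $\delta_{i_{N+1},i_1}$.

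The crux of the argument — and the step I expect to be the most delicate to write cleanly — is the collapse of the hidden environment $i_{N+2},\dots,i_{n+1}$. Summing successively from the outermost index inward, each sum $\sum_{i_{\ell+1}}\Pi^{[\ell]}_{i_\ell i_{\ell+1}}=1$ (the row-stochasticity of $\Pi^{[\ell]}$ furnished by $V_H^{[\ell]\dagger}V_H^{[\ell]}=id_{\mathcal{H}}$) telescopes the product $\prod_{\ell=N+1}^{n}\Pi^{[\ell]}_{i_\ell i_{\ell+1}}$ to $1$, eliminating $i_{N+2},\dots,i_{n+1}$ and leaving only $i_1,\dots,i_{N+1}$. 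The surviving Kronecker delta then forces $i_{N+1}=i_1$, closing the index chain into a cycle; the remaining expression $\sum_{i_1,\dots,i_N}\prod_{\ell=1}^{N}U^{[\ell]}_{i_\ell i_{\ell+1}}\chi^{[\ell]}_{i_\ell}(k_\ell)$ with $i_{N+1}=i_1$ is precisely $\mathrm{Tr}(A^{[1]}_{k_1}\cdots A^{[N]}_{k_N})$ for $a^{[\ell]}_{k;ij}=U^{[\ell]}_{ij}\chi^{[\ell]}_i(k)$, which establishes (\ref{eq_PsiHOnENn}) for every $n\ge N$. Finally, for the gauge condition I would compute $(A^{[n]}_k A^{[n]\dagger}_k)_{ij}=\chi^{[n]}_i(k)\overline{\chi^{[n]}_j(k)}\sum_l U^{[n]}_{il}\overline{U^{[n]}_{jl}}$ and sum over $k$: the unistochasticity assumption makes $U^{[n]}$ unitary, so $\sum_l U^{[n]}_{il}\overline{U^{[n]}_{jl}}=\delta_{ij}$, and then $\sum_k|\chi^{[n]}_i(k)|^2=\sum_k Q^{[n]}_i(k)=1$ yields $\sum_k A^{[n]}_k A^{[n]\dagger}_k=\id_m$, as required. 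The main work is thus purely combinatorial bookkeeping, with the only genuine subtleties being the conjugation convention in the first step and the inward telescoping of the hidden chain in the last.
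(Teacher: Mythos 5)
Your proposal is correct and takes essentially the same route as the paper's own proof: a direct coefficient-wise evaluation of the partial inner product in which the $\sqrt{\pi_{i_1}}$ prefactors cancel, the environment factors reduce to $|a^{[\ell]}_{k_\ell;i_\ell i_{\ell+1}}|^2 = \Pi^{[\ell]}_{i_\ell i_{\ell+1}}\,Q^{[\ell]}_{i_\ell}(k_\ell)$ and sum to $1$ by stochasticity, and the Kronecker delta $\delta_{i_{N+1},i_1}$ closes the index chain into the trace, with the gauge condition obtained from unitarity of $U^{[n]}$ and row-stochasticity of $Q^{[n]}$ exactly as in the paper. The only differences are presentational: you place the gauge-condition computation at the end rather than the beginning, and you spell out the conjugation convention and the inward telescoping of the hidden sums more explicitly than the paper does.
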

\begin{proof}
Let \( U^{[n]} = (U^{[n]}_{ij}) \) be a unitary matrix and \( \chi^{[n]}_{i}(k) \) a complex-valued function. defining the hidden and observation partial isometries \( V_H^{[n]} \) and \( V_O^{[n]} \), respectively.
The associated stochastic matrices are defined by
\[
\Pi^{[n]}_{ij} = \big|U^{[n]}_{ij}\big|^2, \quad Q^{[n]}_{i}(k) = \big|\chi^{[n]}_{i}(k)\big|^2
\]
where \( \Pi^{[n]} \) is the transition matrix for the hidden process, and \( Q^{[n]} \) represents the emission probabilities. Both matrices are stochastic by construction.

We now define the entries of the MPS matrices \( A_k^{[n]} = (a^{[n]}_{k; ij}) \) by:
\begin{equation}\label{eq_aij_proof}
a^{[n]}_{k; ij} = U^{[n]}_{ij} \chi^{[n]}_{i}(k)
\end{equation}

Using the definition of \( A_k^{[n]} \), we compute:
\[
\sum_k A_k^{[n]} A_k^{[n]\, \dagger} = \sum_k \sum_{i',j} U^{[n]}_{ii'} \chi^{[n]}_{i}(k) \overline{U^{[n]}_{ji'}} \overline{\chi^{[n]}_{j}(k)} \, e_i e_j^\dagger
\]
Since \( U^{[n]} \) is unitary, we have \( \sum_{i'} U^{[n]}_{ii'} \overline{U^{[n]}_{ji'}} = \delta_{ij} \). Substituting this, we obtain:
\[
\sum_k A_k^{[n]} A_k^{[n]\, \dagger} = \sum_i \sum_k Q^{[n]}_{i}(k) \, e_i e_i^\dagger = \id_m
\]
which confirms the gauge condition (\ref{eq_sumAkAkd1}). The EHMM state vector is given by:
\[
\big|\Psi_{H,O;n}\big\rangle = \sum_{i_1,\dots,i_{n+1}} \sum_{k_1,\dots,k_n} \sqrt{\pi_{i_1}} \prod_{\ell=1}^n U^{[\ell]}_{i_\ell i_{\ell+1}} \chi^{[\ell]}_{i_\ell}(k_\ell) \, e_{i_1,\dots,i_{n+1}} \otimes |k_1 \cdots k_n\rangle
\]
Splitting the product into two parts, for \( 1 \leq \ell \leq N \) and \( N+1 \leq \ell \leq n \), we write:
\[
\big|\Psi_{H,O;n}\big\rangle = \sum_{i_1,\dots,i_N} \sum_{k_{N+1},\dots,k_n} \sqrt{\pi_{i_1}} \prod_{\ell=1}^N a^{[\ell]}_{k_\ell; i_\ell i_{\ell+1}} \prod_{\ell=N+1}^n a^{[\ell]}_{k_\ell; i_\ell i_{\ell+1}} \, e_{i_1,\dots,i_{n+1}} \otimes |k_1 \cdots k_n\rangle
\]

The vector \( E_{N,n} \) is defined as:
\[
E_{N,n} = \sum_{i_1,\dots,i_{n+1}} \sum_{k_{N+1},\dots,k_n} \frac{1}{\sqrt{\pi_{i_1}}} \prod_{\ell=N+1}^n a^{[\ell]}_{k_\ell; i_\ell i_{\ell+1}} \delta_{i_{N+1},i_1} \, e_{i_1,\dots,i_{n+1}} \otimes |k_{N+1} \cdots k_n\rangle
\]

Projecting \( |\Psi_{H,O;n}\rangle \) onto \( \mathcal{K}^{\otimes N} \) via \( E_{N,n} \), we compute:
\[
\big\langle \Psi_{H,O;n} \mid E_{N,n} \big\rangle_{\mathcal{H}^{\otimes(n+1)} \otimes \mathcal{K}^{\otimes(n-N)}} = \sum_{i_1,\dots,i_N} \sum_{k_1,\dots,k_N} \prod_{\ell=1}^N a^{[\ell]}_{k_\ell; i_\ell i_{\ell+1}} \delta_{i_{N+1},i_1} \prod_{\ell=N+1}^n |a^{[\ell]}_{k_\ell; i_\ell i_{\ell+1}}|^2 \, |k_1 \cdots k_N\rangle
\]

Using \( \big|a^{[\ell]}_{k_\ell; i_\ell i_{\ell+1}}\big|^2 = \Pi^{[\ell]}_{i_\ell i_{\ell+1}} Q^{[\ell]}_{i_\ell}(k_\ell) \) and the stochasticity of \( \Pi^{[\ell]} \) and \( Q^{[\ell]} \), it follows that:
\[
\sum_{i_{N+2},\dots,i_{n+1}} \sum_{k_{N+1},\dots,k_n} \prod_{\ell=N+1}^n \big|a^{[\ell]}_{k_\ell; i_\ell i_{\ell+1}}\big|^2 = 1
\]

Thus:
\begin{eqnarray*}
\big\langle \Psi_{H,O;n} \mid E_{N,n} \big\rangle_{\mathcal{H}^{\otimes (n+1)}\otimes\mathcal{K}^{\otimes (n-N)}} &=& \sum_{k_1,\dots, k_N} \sum_{i_1,i_{N+1}}\bigg[\prod_{\ell=1}^{N}A^{[\ell]}_{k}\bigg]_{i_1i_{N+1}}\delta_{i_1,i_{N+1}}   |k_1\cdots k_N\rangle\\
&=& \sum_{k_1,\dots, k_N} \Tr\bigg(\prod_{\ell=1}^{N}A^{[\ell]}_{k}\bigg) \, |k_1\cdots k_N\rangle\\
\end{eqnarray*}

This establishes (\ref{eq_PsiHOnENn}) and completes the proof.
\end{proof}
\begin{remark}
  The connection between MPS and hidden quantum Markov models is essential for understanding the structure and extension of MPS within a more general algebraic framework. In particular, the construction of MPS can be naturally incorporated into the formalism of QMCs.

  In the setting of the above theorem, when the number of sites satisfies \( N = n \), we consider the vector
  \begin{equation}
    E_{N} = \sum_{i_1,\cdots, i_{N+1}}\frac{1}{\sqrt{\pi_{i_1}}} e_{i_1, \ldots, i_{N+1}}.
  \end{equation}
  This vector plays a crucial role in the characterization of MPS within the framework of hidden quantum Markov models. Specifically, it satisfies the inner product relation:
  \begin{equation}
    \big\langle \Psi_{H,O;N} \mid E_{N} \big\rangle_{\mathcal{H}^{\otimes (n+1)}}  = \sum_{k_1, \dots, k_N} \mathrm{Tr}(A_{k_1}^{[1]} \cdots A_{k_N}^{[N]}) \, |k_1 k_2 \cdots k_N\rangle.
  \end{equation}
  This result highlights how MPS emerge naturally within the formalism of hidden quantum Markov models, providing a deeper insight into their structure. Furthermore, this perspective enables a systematic extension of MPS to the full observation algebra \( \mathcal{B}_{O;\mathbb{N}} \), offering a mathematically rigorous framework for investigating the  thermodynamic limit  in quantum spin systems.
\end{remark}

\subsection{Distinguishability  of $|\psi_N\rangle$ and $|\Psi_{O;N}\rangle$}\label{sect-Dist}
This section is devoted to study the difference between the structures of the two states  $|\psi_N\rangle$ and $|\Psi_{H,O;N}\rangle$ since both belong to the observation  space $\mathcal{H}^{\otimes(N)}$.
The quantum relative entropy \cite{OP} between two density operators $\rho$ and $\sigma$ on the same Hilbert space is defined as:
\[
S(\rho \| \sigma) =
\begin{cases}
\mathrm{Tr}(\rho \log \rho) - \mathrm{Tr}(\rho \log \sigma), & \text{if } \mathrm{supp}(\rho) \subseteq \mathrm{supp}(\sigma), \\
+\infty, & \text{otherwise}.
\end{cases}
\]
where $\rho$ is the state of interest and $\sigma$ is the reference state. Quantum relative entropy is non-negative ($S(\rho \| \sigma) \geq 0$) and equals zero if and only if $\rho = \sigma$.

\begin{lemma}\label{lem-data}
Let \(\rho, \sigma \in \mathcal{B}(\mathcal{H})\) be density matrices and \(\mathcal{E}: \mathcal{B}(\mathcal{H}) \to \mathcal{B}(\mathcal{H}')\) a quantum channel. The quantum relative entropy satisfies:
\begin{equation}\label{eq_data}
 S(\mathcal{E}(\rho) \| \mathcal{E}(\sigma)) \leq S(\rho \| \sigma)
\end{equation}
The inequality above is commonly referred to as the Data Processing Inequality.
\end{lemma}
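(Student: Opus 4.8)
The plan is to prove the monotonicity of the relative entropy by reducing an arbitrary quantum channel to a composition of three elementary operations via its Stinespring dilation, and to verify the inequality on each piece separately. Concretely, from the Kraus representation $\mathcal{E}(X) = \sum_i K_i X K_i^\dagger$ with $\sum_i K_i^\dagger K_i = \id$, one can write $\mathcal{E}(\rho) = \Tr_E\big(U(\rho \otimes \omega_E)U^\dagger\big)$, where $\omega_E = |0\rangle\langle 0|$ is a fixed pure state on an ancilla Hilbert space $\mathcal{H}_E$ and $U$ is an isometry assembled from the Kraus operators $K_i$. It therefore suffices to treat the three building blocks: tensoring with a fixed ancilla state, conjugation by an isometry, and the partial trace.

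First I would dispatch the two elementary reductions. Tensoring with a fixed state leaves the relative entropy unchanged, $S(\rho \otimes \omega \,\|\, \sigma \otimes \omega) = S(\rho \,\|\, \sigma)$, which follows directly from $\log(\rho \otimes \omega) = \log\rho \otimes \id + \id \otimes \log\omega$ together with the factorization of the trace. Likewise, for an isometry $V$ with $V^\dagger V = \id$, the support of $V\rho V^\dagger$ lies in the range of $V$, on which $\log(V\rho V^\dagger) = V(\log\rho)V^\dagger$; substituting into the definition of $S$ and invoking cyclicity of the trace with $V^\dagger V = \id$ yields $S(V\rho V^\dagger \,\|\, V\sigma V^\dagger) = S(\rho \,\|\, \sigma)$. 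Both of these steps preserve the relative entropy exactly, so all of the loss must come from the partial trace.

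The substantive step, and the expected main obstacle, is monotonicity under the partial trace,
\[
S(\Tr_E\, \rho_{SE} \,\|\, \Tr_E\, \sigma_{SE}) \le S(\rho_{SE} \,\|\, \sigma_{SE}).
\]
This cannot be reached by elementary manipulation of logarithms, since $\log$ does not commute with $\Tr_E$; its genuine content is the joint convexity of $(\rho,\sigma) \mapsto S(\rho \,\|\, \sigma)$, which rests on Lieb's concavity theorem. Granting joint convexity, I would obtain the partial-trace bound by a twirling argument: choosing a unitary operator basis $\{W_j\}_{j=1}^{d_E^2}$ on $\mathcal{H}_E$, the identity $\frac{1}{d_E^2}\sum_j (\id_S \otimes W_j)\, X\, (\id_S \otimes W_j^\dagger) = (\Tr_E X) \otimes \frac{\id_E}{d_E}$ expresses the partial trace as an average over unitary conjugations. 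Applying joint convexity to this average, then the unitary invariance established above to each summand, gives $S(\rho_S \otimes \tfrac{\id_E}{d_E} \,\|\, \sigma_S \otimes \tfrac{\id_E}{d_E}) \le S(\rho_{SE} \,\|\, \sigma_{SE})$, and the tensoring property reduces the left side to $S(\rho_S \,\|\, \sigma_S)$.

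Finally I would assemble the chain: starting from $S(\rho \,\|\, \sigma)$, the ancilla-tensoring step preserves it, the isometric dilation step preserves it, and the partial trace step can only decrease it, whence $S(\mathcal{E}(\rho) \,\|\, \mathcal{E}(\sigma)) \le S(\rho \,\|\, \sigma)$. The only analytically heavy ingredient is Lieb's concavity theorem underlying joint convexity; everything else is bookkeeping with the functional calculus and the trace. An alternative I would keep in reserve is Petz's variational characterization of the relative entropy, from which monotonicity follows more directly, but the dilation-plus-joint-convexity route is the most transparent and self-contained.
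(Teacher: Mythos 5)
Your proposal is correct, and its top-level skeleton is exactly the paper's: pass to the Stinespring dilation, observe that tensoring with a fixed ancilla state and conjugation by a unitary (or isometry) leave $S$ invariant, and push all of the loss into the partial trace. The difference lies in where each argument places its black box. The paper's proof sketch simply \emph{cites} monotonicity of relative entropy under the partial trace (deferring to Nielsen--Chuang) and stops; you instead derive that monotonicity from joint convexity of $(\rho,\sigma)\mapsto S(\rho\|\sigma)$ via the twirling identity $\frac{1}{d_E^2}\sum_j (\id\otimes W_j)\,X\,(\id\otimes W_j^\dagger) = (\Tr_E X)\otimes \id/d_E$ for a unitary operator basis $\{W_j\}$ of $\mathcal{B}(\mathcal{H}_E)$, so that the only unproved ingredient left is Lieb's concavity theorem. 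This buys a genuinely more self-contained proof whose analytic core is isolated in a single classical result, whereas the paper's version is shorter but assumes precisely the statement (partial-trace monotonicity) that carries all the content. Two smaller points in your favor: you verify the exactness of the tensoring step explicitly via $\log(\rho\otimes\omega)=\log\rho\otimes\id+\id\otimes\log\omega$, which the paper's sketch silently folds into ``unitary invariance,'' and you correctly note the support caveat needed to make $\log(V\rho V^\dagger)=V(\log\rho)V^\dagger$ meaningful for an isometry $V$. Both arguments are sound; yours is the more complete one.
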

\begin{proof}[Proof Sketch]
Using the Stinespring dilation theorem, the CPTP map $\Phi$ can be written as $\Phi(\rho) = \mathrm{Tr}_E[U (\rho \otimes |0\rangle\langle 0|) U^\dagger]$. The monotonicity of relative entropy under partial trace ensures that
\[
S(\Phi(\rho) \| \Phi(\sigma)) \leq S(U (\rho \otimes |0\rangle\langle 0|) U^\dagger \| U (\sigma \otimes |0\rangle\langle 0|) U^\dagger).
\]
Finally, the invariance of relative entropy under unitary transformations gives $S(U (\rho \otimes |0\rangle\langle 0|) U^\dagger \| U (\sigma \otimes |0\rangle\langle 0|) U^\dagger) = S(\rho \| \sigma)$, proving the inequality. For further details, see \cite{N10}.
\end{proof}

The density matrix of the \((H,O)\)-density operator \(\rho_{H,O;N} = |\Psi_{H,O;N}\rangle\langle\Psi_{H,O;N}|\) for the EHMM \(|\Psi_{H,O;N}\rangle\) (given by (\ref{PsiHOn})) can be analyzed as follows:

\begin{lemma}\label{lem_rhoON}
Let \(\rho_{O;N} = \Tr_{\mathcal{H}^{\otimes(N+1)}}(\rho_{H,O;N})\) denote the partial observation density matrix obtained by tracing out the Hilbert space \(\mathcal{H}^{\otimes(N+1)}\) from the joint density matrix \(\rho_{H,O;N}\). Then, the partial observation density matrix is given by:
\begin{equation}\label{rho_ON}
\rho_{O;N} = \sqrt{m}\sum_{k_1, \dots, k_N} \pi^\dagger \big( A^{[1]}_{k_1} \diamond \overline{A^{[1]}_{k'_1}} \big) \cdots \big( A^{[N]}_{k_N} \diamond \overline{A^{[N]}_{k'_N}} \big) e   \, |k_1 \cdots k_N\rangle \langle k'_1 \cdots k'_N|
\end{equation}
where \(A_{k_\ell}^{[\ell]}\) is an \(m \times m\) matrix with entries \(a^{[\ell]}_{k; ij} = U_{ij}^{[\ell]} \, \chi^{[\ell]}_i(k)\) and $\pi^\dagger = \sum_{i}\pi_i e_i^\dagger$ and $e = \frac{1}{\sqrt{m}}\sum_{j}e_j$. Here \( \diamond \) denotes the Schur product between the matrices
\end{lemma}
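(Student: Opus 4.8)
The plan is to compute the pure-state density matrix $\rho_{H,O;N} = |\Psi_{H,O;N}\rangle\langle\Psi_{H,O;N}|$ explicitly in the product basis and then carry out the partial trace over $\mathcal{H}^{\otimes(N+1)}$ by hand, reorganizing the surviving coefficients into a matrix product of Schur products. First I would rewrite the EHMM vector (\ref{PsiHOn}) using the MPS entries (\ref{eq_aij}), so that
\[
|\Psi_{H,O;N}\rangle = \sum_{i_1,\dots,i_{N+1}}\sum_{k_1,\dots,k_N}\sqrt{\pi_{i_1}}\,\prod_{\ell=1}^{N} a^{[\ell]}_{k_\ell; i_\ell i_{\ell+1}}\; e_{i_1,\dots,i_{N+1}}\otimes|k_1\cdots k_N\rangle.
\]
Forming the outer product with a second, primed family of summation indices $(i'_\ell, k'_\ell)$ then presents $\rho_{H,O;N}$ as a double sum whose hidden factor is $(e_{\vec{i}})(e_{\vec{i}'})^\dagger$ and whose observation factor is $|\vec{k}\rangle\langle\vec{k}'|$.

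Next I would apply $\Tr_{\mathcal{H}^{\otimes(N+1)}}$. Because the hidden basis is orthonormal, $\Tr\big((e_{\vec{i}})(e_{\vec{i}'})^\dagger\big)=\delta_{\vec{i},\vec{i}'}$, so the partial trace enforces $i_\ell = i'_\ell$ for every $\ell = 1,\dots,N+1$, leaving
\[
\rho_{O;N} = \sum_{i_1,\dots,i_{N+1}}\sum_{\vec{k},\vec{k}'}\pi_{i_1}\prod_{\ell=1}^{N} a^{[\ell]}_{k_\ell; i_\ell i_{\ell+1}}\,\overline{a^{[\ell]}_{k'_\ell; i_\ell i_{\ell+1}}}\;|\vec{k}\rangle\langle\vec{k}'|.
\]
The key observation is that each factor $a^{[\ell]}_{k_\ell; i_\ell i_{\ell+1}}\,\overline{a^{[\ell]}_{k'_\ell; i_\ell i_{\ell+1}}}$ is exactly the $(i_\ell, i_{\ell+1})$ entry of the Schur product $A^{[\ell]}_{k_\ell}\diamond\overline{A^{[\ell]}_{k'_\ell}}$ by definition (\ref{tensor-schur}); this is the step where the diagonal identification forced by the trace fuses the two separate matrix products of the bra and the ket into a single entrywise product. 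Summing the internal indices $i_2,\dots,i_N$ then assembles these factors into the ordered product $\big(A^{[1]}_{k_1}\diamond\overline{A^{[1]}_{k'_1}}\big)\cdots\big(A^{[N]}_{k_N}\diamond\overline{A^{[N]}_{k'_N}}\big)$, with only the boundary indices $i_1$ and $i_{N+1}$ left free.

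Finally I would contract the two boundaries. The left index $i_1$ carries the weight $\pi_{i_1}$, which I encode through the covector $\pi^\dagger = \sum_i \pi_i e_i^\dagger$; the right index $i_{N+1}$ is summed with unit weight, which is the action of the vector $\sqrt{m}\,e = \sum_j e_j$. Hence for any matrix $M$ one has $\sum_{i_1,i_{N+1}}\pi_{i_1}M_{i_1 i_{N+1}} = \sqrt{m}\,\pi^\dagger M e$, and substituting $M = \big(A^{[1]}_{k_1}\diamond\overline{A^{[1]}_{k'_1}}\big)\cdots\big(A^{[N]}_{k_N}\diamond\overline{A^{[N]}_{k'_N}}\big)$ yields (\ref{rho_ON}). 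I expect the only genuine obstacle to be bookkeeping: keeping the order of the Schur-product factors consistent with left-to-right matrix multiplication, and tracking the $\tfrac{1}{\sqrt{m}}$ in the definition of $e$ so that the compensating $\sqrt{m}$ surfaces correctly in the final expression.
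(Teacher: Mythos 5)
Your proposal is correct and takes essentially the same route as the paper's proof: expand $\rho_{H,O;N}=|\Psi_{H,O;N}\rangle\langle\Psi_{H,O;N}|$ in the product basis, let the partial trace over $\mathcal{H}^{\otimes(N+1)}$ identify the primed and unprimed hidden indices, recognize each surviving factor $a^{[\ell]}_{k_\ell; i_\ell i_{\ell+1}}\overline{a^{[\ell]}_{k'_\ell; i_\ell i_{\ell+1}}}$ as the $(i_\ell,i_{\ell+1})$ entry of $A^{[\ell]}_{k_\ell}\diamond\overline{A^{[\ell]}_{k'_\ell}}$, and contract the two boundary indices against $\pi^\dagger$ and $e$. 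If anything, your closing identity $\sum_{i_1,i_{N+1}}\pi_{i_1}M_{i_1 i_{N+1}}=\sqrt{m}\,\pi^\dagger M e$ is more explicit than the paper's final step, which only states that summing over $i_1,\dots,i_{N+1}$ recovers (\ref{rho_ON}); it makes the origin of the $\sqrt{m}$ prefactor transparent.
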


\begin{proof}
The joint \((H, O)\)-density matrix \(\rho_{H,O;N}\) is expressed as:
\[
\rho_{H,O;N} = \sum_{\substack{i_1, \dots, i_{N+1} \\ i'_1, \dots, i'_{N+1}}} \sum_{\substack{k_1, \dots, k_N \\ k'_1, \dots, k'_N}}\sqrt{\pi_{i_1}}\sqrt{\pi_{i'_1}}
\Bigg( \prod_{\ell=1}^N U^{[\ell]}_{i_\ell i_{\ell+1}} \chi^{[\ell]}_{i_\ell}(k_\ell)
\prod_{\ell=1}^N \overline{U^{[\ell]}_{i'_\ell i'_{\ell+1}} \chi^{[\ell]}_{i'_\ell}(k'_\ell)} \Bigg)
\]
\[\times
e_{i_1 \dots i_{N+1}} e^\dagger_{i'_1 \dots i'_{N+1}} \otimes |k_1 \cdots k_N\rangle \langle k'_1 \cdots k'_N|
\]

Tracing out the Hilbert space \(\mathcal{H}^{\otimes(N+1)}\), the partial density matrix \(\rho_{O;N}\) is obtained as:
\[
\rho_{O;N} = \Tr_{\mathcal{H}^{\otimes(N+1)}}(\rho_{H,O;N})
\]

Substituting \(\rho_{H,O;N}\), we get:
\[
\rho_{O;N} = \sum_{i_1, \dots, i_{N+1} } \sum_{\substack{k_1, \dots, k_N \\ k'_1, \dots, k'_N}}\pi_{i_1}
\prod_{\ell=1}^N\Big( U^{[\ell]}_{i_\ell i_{\ell+1}} \chi^{[\ell]}_{i_\ell}(k_\ell)
 \overline{U^{[\ell]}_{i_\ell i_{\ell+1}} \chi^{[\ell]}_{i_\ell}(k'_\ell)} \Big)
 |k_1 \cdots k_N\rangle \langle k'_1 \cdots k'_N|
\]

Rewriting in terms of the matrices $A^{[\ell]}_{k_{\ell}}$ with  elements \(a^{[\ell]}_{k_{\ell}; i_\ell i_{\ell+1}} = U^{[\ell]}_{i_\ell i_{\ell+1}} \chi^{[\ell]}_{i_\ell}(k_\ell)\), the expression becomes:
\[
\rho_{O;N} = \sum_{\substack{k_1, \dots, k_N \\ k'_1, \dots, k'_N}} \sum_{i_1, \dots, i_{N+1}}
\pi_{i_1}\prod_{\ell=1}^N\big( a^{[\ell]}_{k_\ell; i_\ell i_{\ell+1}}
\overline{a^{[\ell]}_{k'_\ell; i_\ell i_{\ell+1}}}\big) \, |k_1 \cdots k_N\rangle \langle k'_1 \cdots k'_N|
\]

Using the notation for the Schur product notation, we can express this as:
\[
\rho_{O;N} = \sum_{\substack{k_1, \dots, k_N \\ k'_1, \dots, k'_N}} \sum_{i_1, \dots, i_{N+1}}\pi_{i_1}
\prod_{\ell=1}^N \big( A^{[\ell]}_{k_\ell} \diamond \overline{A^{[\ell]}_{k'_\ell}} \big)_{i_\ell i_{\ell+1}}
|k_1 \cdots k_N\rangle \langle k'_1 \cdots k'_N|
\]

Finally, summing over the indices \(i_1, \dots, i_{N+1}\) and writing the result in terms of the \(e_i\)-basis, we recover the expression in  \eqref{rho_ON}, concluding the proof.
\end{proof}

\begin{theorem}\label{thm_entropyInequality}
Under the notations and assumptions of Theorem \ref{thm_main2}, let \( \rho_{N} = \frac{1}{m} |\psi_N\rangle \langle \psi_N| \) represent the density operators of the MPS and \( \rho_{O;N} \) denote the observation density matrix. The relative entropy between these density operators satisfies the following inequality:
\begin{equation}\label{eq_entrIneq}
S(\rho_N \| \rho_{O;N}) \geq \frac{1}{m} \sum_{k_1, \dots, k_N} \left| \mathrm{Tr}\left( A_{k_1}^{[1]} \cdots A_{k_N}^{[N]} \right) \right|^2 \log \left( \frac{\left| \mathrm{Tr}\left( A_{k_1}^{[1]} \cdots A_{k_N}^{[N]} \right) \right|^2}{m^{3/2} \pi^\dagger \left( \prod_{\ell=1}^N \left| A_{k_\ell}^{[\ell]} \right|^2_{\diamond} \right) e} \right)
\end{equation}
Here, \( \left| A_{k_\ell}^{[\ell]} \right|^2_{\diamond} = A_{k_\ell}^{[\ell]} \diamond \overline{A_{k_\ell}^{[\ell]}} \),  \( \pi^\dagger \) and \( e \) as defined in Lemma \ref{lem_rhoON}, and the summation spans all possible sequences \( k_1, \dots, k_N \).
\end{theorem}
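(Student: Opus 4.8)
The plan is to derive \eqref{eq_entrIneq} from the monotonicity of the relative entropy (Lemma~\ref{lem-data}) applied to the completely dephasing channel in the computational product basis of $\mathcal{K}^{\otimes N}$. The key observation guiding the whole argument is that the opaque right-hand side of \eqref{eq_entrIneq} is nothing but the classical Kullback--Leibler divergence between the diagonals of $\rho_N$ and $\rho_{O;N}$ in that basis; this is what dictates the choice of processing map.

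First I would introduce the pinching channel $\mathcal{D}$ on $\mathcal{B}(\mathcal{K}^{\otimes N})$ given by $\mathcal{D}(X)=\sum_{k_1,\dots,k_N} P_{\mathbf k}\,X\,P_{\mathbf k}$ with $P_{\mathbf k}=|k_1\cdots k_N\rangle\langle k_1\cdots k_N|$. Since $\sum_{\mathbf k}P_{\mathbf k}=\id$, the map $\mathcal{D}$ is CPTP, so Lemma~\ref{lem-data} yields $S(\rho_N\|\rho_{O;N})\ge S(\mathcal{D}(\rho_N)\|\mathcal{D}(\rho_{O;N}))$. (If $\mathrm{supp}(\rho_N)\not\subseteq\mathrm{supp}(\rho_{O;N})$ the left-hand side is $+\infty$ and the claim is vacuous, so I may assume the support inclusion holds.) Next I would read off the two diagonals. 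From $|\psi_N\rangle=\sum_{\mathbf k}\Tr(A_{k_1}^{[1]}\cdots A_{k_N}^{[N]})\,|k_1\cdots k_N\rangle$ one gets $\langle\mathbf k|\rho_N|\mathbf k\rangle=\tfrac1m\big|\Tr(A_{k_1}^{[1]}\cdots A_{k_N}^{[N]})\big|^2=:p_{\mathbf k}$, while setting $k'_\ell=k_\ell$ in the expression \eqref{rho_ON} of Lemma~\ref{lem_rhoON} gives $\langle\mathbf k|\rho_{O;N}|\mathbf k\rangle=\sqrt{m}\,\pi^\dagger\big(\prod_{\ell=1}^N |A_{k_\ell}^{[\ell]}|^2_{\diamond}\big)e=:q_{\mathbf k}$.

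Because $\mathcal{D}(\rho_N)$ and $\mathcal{D}(\rho_{O;N})$ are simultaneously diagonal in $\{|k_1\cdots k_N\rangle\}$, they commute, and the quantum relative entropy collapses to the classical divergence $S(\mathcal{D}(\rho_N)\|\mathcal{D}(\rho_{O;N}))=\sum_{\mathbf k}p_{\mathbf k}\log(p_{\mathbf k}/q_{\mathbf k})$. Substituting $p_{\mathbf k}=\tfrac1m\big|\Tr(A_{k_1}^{[1]}\cdots A_{k_N}^{[N]})\big|^2$ together with $p_{\mathbf k}/q_{\mathbf k}=\big|\Tr(A_{k_1}^{[1]}\cdots A_{k_N}^{[N]})\big|^2/\big(m^{3/2}\,\pi^\dagger(\prod_\ell |A_{k_\ell}^{[\ell]}|^2_{\diamond})e\big)$ reproduces the right-hand side of \eqref{eq_entrIneq} verbatim, closing the chain of inequalities. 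Along the way I would record the normalization facts that make the divergence well posed: using $\sum_k|A_k^{[\ell]}|^2_{\diamond}=\Pi^{[\ell]}$ (which follows from $\sum_k Q^{[\ell]}_i(k)=1$), the row-stochasticity of each $\Pi^{[\ell]}$, and $\sum_i\pi_i=1$, one verifies $\sum_{\mathbf k}q_{\mathbf k}=\Tr(\rho_{O;N})=1$, so $\rho_{O;N}$ is a genuine state.

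I expect the main obstacle to be conceptual rather than computational, namely recognizing that the stated bound is exactly the pinched classical divergence, which is the whole reason the dephasing channel is the correct processing map; once this is seen, the inequality is forced by Lemma~\ref{lem-data}. The only technical points requiring care are the legitimacy of replacing the quantum relative entropy of the pinched pair by the classical one (valid precisely because the pinched operators commute) and checking that the hypotheses of Lemma~\ref{lem-data}, especially the support inclusion $\mathrm{supp}(\rho_N)\subseteq\mathrm{supp}(\rho_{O;N})$, are genuinely in force under the assumptions inherited from Theorem~\ref{thm_main2}.
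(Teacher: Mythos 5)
Your proposal is correct and takes essentially the same route as the paper: both arguments pinch $\rho_N$ and $\rho_{O;N}$ with the dephasing channel onto the diagonal algebra of $\mathcal{B}(\mathcal{K}^{\otimes N})$, identify the relative entropy of the pinched (commuting) pair with the classical divergence of the diagonal entries $p_{\mathbf{k}}=\tfrac{1}{m}\big|\mathrm{Tr}(A_{k_1}^{[1]}\cdots A_{k_N}^{[N]})\big|^2$ and $q_{\mathbf{k}}=\sqrt{m}\,\pi^\dagger\big(\prod_{\ell}|A_{k_\ell}^{[\ell]}|^2_{\diamond}\big)e$, and conclude by the data processing inequality of Lemma~\ref{lem-data}. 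If anything, your version is slightly more careful than the paper's, since you explicitly dispose of the support condition and verify the normalization of $\rho_{O;N}$.
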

\begin{proof}
The density  matrix associated with the MPS $|\psi_N\rangle$  given by (\ref{eq_psiN}) is given by  $\rho_{N} = \frac{1}{m} |\psi_N\rangle\langle\psi_N |$   then $\rho_{N}$ becomes
$$
\rho_{N} =  \sum_{\substack{k_1,\dots, k_{N} \\ k'_1,\dots, k'_{N}}} \mathrm{Tr}\big(A_{k_1}^{[1]} A_{k_2}^{[2]} \cdots A_{k_N}^{[N]}\big)\overline{\mathrm{Tr}\big(A_{k'_1}^{[1]} A_{k'_2}^{[2]} \cdots A_{k'_N}^{[N]}\big)}\, |k_1  \cdots k_N\rangle\langle k'_1  \cdots k'_N|
$$
Taking into account that the expressions of the tensors $A^{[n]}_{k_n}$ satisfy (\ref{eq_aij}), one gets
\[
\rho_{N} = \sum_{\substack{k_1,\dots, k_{N} \\ k'_1,\dots, k'_{n}}}\sum_{\substack{i_1,\dots, i_{N} \\ i'_1,\dots, i'_{n}}}\big(U^{[1]}_{i_1i_2}\chi_{i_1}(k_1)\cdots U^{[N]}_{i_Ni_{1}}\chi^{[\ell]}_{i_N}(k_N)\big)\ \big(\overline{U^{[1]}_{i'_1i'_2}\chi^{[\ell]}_{i'_1}(k'_1)}\cdots \overline{U^{[N]}_{i'_Ni'_{1}}\chi^{[N]}_{i'_N}(k'_N)}\big)
\]
\[
\times \big|k_1\cdots k_N\big\rangle\big\langle k'_1\cdots k'_N\big|
\]
The diagonal observation density matrix is defined
$$
\rho_{O;N, diag} =  \sum_{k_1, \dots, k_N} \sum_{i,j} e_i^\dagger \big( A^{[1]}_{k_1} \diamond \overline{A^{[1]}_{k_1}} \big) \cdots \big( A^{[N]}_{k_N} \diamond \overline{A^{[N]}_{k_N}} \big) e_j \, |k_1 \cdots k_N\rangle \langle k_1 \cdots k_N|
$$
then its logarithm is
$$
\log(\rho_{O;N, diag}) =  \sum_{k_1,\dots, k_N}\log\Big(\sum_{i,j} e_i^\dagger \big( A^{[1]}_{k_1} \diamond \overline{A^{[1]}_{k_1}} \big) \cdots \big( A^{[N]}_{k_N} \diamond \overline{A^{[N]}_{k_N}} \big) e_j\Big)\, |k_1 \cdots k_N\rangle\langle k_1 \cdots k_N|
$$
and
$$
\rho_{N; diag} =  \frac{1}{m}\sum_{ k_1,\dots, k_{N}  }\big|\mathrm{Tr}\big(A_{k_1}^{[1]}  \cdots A_{k_N}^{[N]}\big)\big|^2 \, |k_1  \cdots k_N\rangle\langle k_1  \cdots k_N|
$$

It follows that
$$
S(\rho_{N, diag} \| \rho_{O;N, diag}) = \Tr(\rho_{N, diag} \log(\rho_{N, diag})\big) - \Tr(\rho_{N, diag} \log(\rho_{O;N, diag})\big)
$$
$$
= \frac{1}{m}\sum_{ k_1,\dots, k_{N}  }\big|\mathrm{Tr}\big(A_{k_1}^{[1]} A_{k_2}^{[2]} \cdots A_{k_N}^{[N]}\big)\big|^2\log\left(\frac{\big|\mathrm{Tr}\big(A_{k_1}^{[1]} A_{k_2}^{[2]} \cdots A_{k_N}^{[N]}\big)\big|^2}{m^{3/2}\,\pi^\dagger \big( A^{[1]}_{k_1} \diamond \overline{A^{[1]}_{k_1}} \big) \cdots \big( A^{[N]}_{k_N} \diamond \overline{A^{[N]}_{k_N}} \big) e}\right)
$$

The diagonal projection $\Phi$  from $\mathcal{B}(\mathcal{K}^{\otimes N})$  onto the diagonal algebra $\mathcal{D}_f(\mathcal{K}^{\otimes N})$, spanned by the elements $|k_1\cdots k_N\rangle\langle k_1\cdots k_N| $ and  defined by
$$
\Phi(Y) = \sum_{k_1,\dots, k_N}|k_1\cdots k_N\rangle\langle k_1\cdots k_N| \,Y \, |k_1\cdots k_N\rangle\langle k_1\cdots k_N|
$$
is  a clearly a quantum channel, even  bistochastic channel and $\rho_{O;N, diag}= \Phi(\rho_{O;N})$ and $\rho_{N, diag}= \Phi(\rho_{N})$. Considering the data processing inequality (\ref{eq_data}). This leads to (\ref{eq_entrIneq}) and finishes the proof.
\end{proof}

\section{EHMMs Arising from Matrix Product States}\label{sect-EMM-MPS}
This section explores the converse direction of the previous theorem. Starting with a MPS  satisfying the gauge condition, we construct a relevant  EHMM.

\begin{theorem}\label{thm_main2}
Let \( \{A^{[n]}_{k}\}_{1 \leq k \leq d, n \in \mathbb{N}} \) be a sequence of \( m \times m \) matrices such that for each \( n \), the following gauge condition holds:
\begin{equation}\label{eq_SAkA1}
\sum_{k} A^{[n]}_{k} A^{[n]\dagger}_{k} = \id_m
\end{equation}
Then, the maps \( V^{[n]}_{H} : \mathcal{H} \to \mathcal{H} \otimes \mathcal{H} \) and \( V^{[n]}_{H,O} : \mathcal{H} \to \mathcal{H} \otimes \mathcal{K} \) defined by:
\begin{equation}\label{eq_VHVOn}
V^{[n]}_{H}(e_{i}) = \sum_{j} \sqrt{\sum_{k} |a_{k;ij}^{[n]}|^2} \, e_{i} \otimes e_{j}, \quad
V^{[n]}_{H,O}(e_{i}) = \sum_{k} \sqrt{\sum_{j} |a_{k;ij}^{[n]}|^2} \, e_{i} \otimes |k\rangle
\end{equation}
are partial isometries and define an EHMM on the algebra \( \mathcal{B}_{H,O} \). Here, \( \sqrt{\cdot} \) denotes any complex square root, not necessarily the positive one.
\end{theorem}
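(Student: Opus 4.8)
The plan is to identify the two maps in (\ref{eq_VHVOn}) as particular instances of the hidden and observation partial isometries (\ref{VH}) and (\ref{VOn}) introduced in Section \ref{sect-EHMMs}, and then to run the EHMM construction recalled there. Concretely, I would set $U^{[n]}_{ij} = \sqrt{\sum_k |a^{[n]}_{k;ij}|^2}$ and $\chi^{[n]}_i(k) = \sqrt{\sum_j |a^{[n]}_{k;ij}|^2}$, so that the induced stochastic data are $\Pi^{[n]}_{ij} = |U^{[n]}_{ij}|^2 = \sum_k |a^{[n]}_{k;ij}|^2$ and $Q^{[n]}_i(k) = |\chi^{[n]}_i(k)|^2 = \sum_j |a^{[n]}_{k;ij}|^2$. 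Only the moduli of these square roots enter the definitions of $V^{[n]}_H$ and $V^{[n]}_{H,O}$, so the choice of branch is immaterial, which is exactly the remark appended to the statement.

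The first and central step is to extract row-stochasticity of both $\Pi^{[n]}$ and $Q^{[n]}$ from the single gauge hypothesis (\ref{eq_SAkA1}). The key observation is that the $(i,i)$ diagonal entry of $\sum_k A^{[n]}_k A^{[n]\dagger}_k = \id_m$ reads
\[
\sum_{j,k} |a^{[n]}_{k;ij}|^2 = \Big( \sum_k A^{[n]}_k A^{[n]\dagger}_k \Big)_{ii} = 1,
\]
and this one double sum equals both $\sum_j \Pi^{[n]}_{ij}$ (summing over $k$ first) and $\sum_k Q^{[n]}_i(k)$ (summing over $j$ first). Hence the hidden transition matrix and the emission matrix are row-stochastic simultaneously. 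I would stress that only the diagonal part of (\ref{eq_SAkA1}) is used here; the off-diagonal cancellations encoded in the gauge condition are stronger than what the partial-isometry property demands, and in particular $U^{[n]}$ need not be unitary, in contrast to the unistochastic hypothesis of Theorem \ref{thm_main1}.

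Granting row-stochasticity, the partial-isometry property is the routine computation already carried out in Section \ref{sect-EHMMs}: expanding the inner products gives $\langle V^{[n]}_H e_i, V^{[n]}_H e_{i'}\rangle = \delta_{ii'} \sum_j |U^{[n]}_{ij}|^2 = \delta_{ii'}$ and $\langle V^{[n]}_{H,O} e_i, V^{[n]}_{H,O} e_{i'}\rangle = \delta_{ii'} \sum_k |\chi^{[n]}_i(k)|^2 = \delta_{ii'}$, whence $V^{[n]\dagger}_H V^{[n]}_H = V^{[n]\dagger}_{H,O} V^{[n]}_{H,O} = \id_H$. Fixing any initial distribution $\pi_1 = \sum_i \pi_i e_i$, the transition expectations $\mathcal{E}_{H;n} = V^{[n]\dagger}_H (\cdot) V^{[n]}_H$ and emission expectations $\mathcal{E}_{H,O;n} = V^{[n]\dagger}_{H,O}(\cdot) V^{[n]}_{H,O}$ together with $\pi_1$ form the triplet which, by the construction of Section \ref{sect-EHMMs} (following \cite{AGLSQuan24, SS23}), assembles into an EHMM $\varphi_{H,O}$ on $\mathcal{B}_{H,O}$.

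I do not expect a genuine obstacle; the one substantive point, rather than a difficulty, is the bookkeeping that both stochastic normalizations issue from the same diagonal sum, so that a single gauge condition controls two separate partial isometries. The only caveat worth recording explicitly is that this EHMM generally fails the factorization $a^{[n]}_{k;ij} = U^{[n]}_{ij} \chi^{[n]}_i(k)$ of Theorem \ref{thm_main1}, so it need not coincide with the EHMM produced there, which is precisely the equivalence question the paper flags as open.
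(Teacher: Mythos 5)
Your proposal is correct and takes essentially the same route as the paper's proof: both read off the row-stochasticity of $\Pi^{[n]}_{ij}=\sum_k |a^{[n]}_{k;ij}|^2$ and $Q^{[n]}_i(k)=\sum_j |a^{[n]}_{k;ij}|^2$ from the diagonal ($i=i'$) entries of the gauge condition, note this makes the two maps partial isometries, and then assemble the EHMM from the induced transition and emission expectations $V^{[n]\dagger}(\cdot)V^{[n]}$ together with an initial state. The only cosmetic difference is that the paper writes these expectations out explicitly on localized observables, while you invoke the Section \ref{sect-EHMMs} construction by reference.
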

\begin{proof}
Let the entries of the matrix \( A_{k}^{[n]} \) be denoted by \( a_{k;ij}^{[n]} \). Define the following quantities:
\begin{equation}\label{eq_PiQd}
\Pi'^{[n]}_{ij} := \sum_{k} |a_{k;ij}^{[n]}|^2, \quad Q'^{[n]}_i(k) := \sum_{j} |a_{k;ij}^{[n]}|^2
\end{equation}
The gauge condition (\ref{eq_SAkA1}) implies that:
\begin{equation*}
\sum_{k} \sum_{j} a_{k;ij}^{[n]} \overline{a^{[n]}_{k;i'j}} = \delta_{i,i'}
\end{equation*}
In particular, for \( i = i' \), we obtain:
\begin{equation*}
\sum_{k} \sum_{j} |a_{k;ij}^{[n]}|^2 = 1
\end{equation*}
This shows that the matrix \( \Pi'^{[n]} = [\Pi'^{[n]}_{ij}] \) is an \( m \times m \) stochastic matrix, and \( Q'^{[n]} = [Q'^{[n]}_i(k)] \) is an \( m \times d \) stochastic matrix.

Now consider the EHMM associated with the pair \( \big((\Pi'^{[n]})_n, (Q'^{[n]})_n\big) \).

The  emission transition expectation  is defined as a map from \( \mathcal{B}_{H;n} \otimes \mathcal{B}_{O;n} \) to \( \mathcal{B}_{H;n} \):
\begin{align*}
\mathcal{E}_{H,O;n}(X_n \otimes Y_n) &= V^{[n]}_{H,O;n} (X_n \otimes Y_n) V^{[n]\dagger}_{H,O;n} \\
&= \sum_{(i_n,j_n) \in I_H^2} \sum_{k_n, l_n \in I_O} \overline{\sqrt{\sum_{j} |a^{[n]}_{k_n;i_nj}|^2}} \sqrt{\sum_{j} |a^{[n]}_{l_n;j_nj}|^2} \, x_{n;i_nj_n} y_{n;k_nl_n} \, |i_n\rangle \langle j_n| \\
&= \sum_{(i_n,j_n) \in I_H^2} \sum_{k_n, l_n \in I_O} \overline{\sqrt{Q'^{[n]}_{i_n}(k_n)}} \sqrt{Q'^{[n]}_{j_n}(l_n)} \, x_{n;i_nj_n} y_{n;k_nl_n} \, |i_n\rangle \langle j_n|
\end{align*}

The  hidden transition expectation  is defined as a map from \( \mathcal{B}_{H;n} \otimes \mathcal{B}_{H;n+1} \) to \( \mathcal{B}_{H;n} \):
\begin{align*}
\mathcal{E}_{H;n}(X_n \otimes X_{n+1}) &= V^{[n]}_{H;n} (X_n \otimes X_{n+1}) V^{[n]\dagger}_{H;n} \\
&= \sum_{\substack{i_n,j_n \\ i_{n+1},j_{n+1}}} \overline{\sqrt{\sum_{k} |a^{[n]}_{k;i_ni_{n+1}}|^2}} \sqrt{\sum_{k} |a^{[n]}_{k;j_nj_{n+1}}|^2} \, x_{n;i_nj_n} x_{n+1;i_{n+1}j_{n+1}} \, |i_n\rangle \langle j_n| \\
&= \sum_{\substack{i_n,j_n \\ i_{n+1},j_{n+1}}} \overline{\sqrt{\Pi'^{[n]}_{i_n,i_{n+1}}}} \sqrt{\Pi'^{[n]}_{j_n,j_{n+1}}} \, x_{n;i_nj_n} x_{n+1;i_{n+1}j_{n+1}} \, |i_n\rangle \langle j_n|
\end{align*}

Finally, by specifying an initial state \( \rho \), the quadruplet \( (\rho, (\mathcal{E}_{H;n})_n, (\mathcal{E}_{H,O;n})_n) \) defines an EHMM. This completes the proof.
\end{proof}

\begin{example}
Consider
$$
A_1^{[n]} = \left[\begin{array}{cc}
                    \cos(\theta_n) & 0 \\
                    0 & 1
                  \end{array}\right]\quad; \quad A_2^{[n]} = \left[\begin{array}{cc}
                    0 & \sin(\theta_n) \\
                    0 & 0
                  \end{array}\right]
$$
It is clear that $A_1^{[n]}$ and $A_2^{[n]}$ satisfy (\ref{eq_SAkA1}). One has $d=m=2$, consider $\mathcal{H}=\mathcal{K} = \mathbb{C}^2$ with orthonormal basis $e_1 = |1\rangle =  \left[\begin{array}{cc}
                    1 \\
                    0
                  \end{array}\right]\quad; \quad e_2 = |2\rangle =  \left[\begin{array}{cc}
                    0  \\
                    1
                  \end{array}\right]$. \\
From (\ref{eq_PiQ}) one finds
$$
\Pi^{[n]} = \left[\begin{array}{cc}
                    \cos^2(\theta_n) & \sin^2(\theta_n) \\
                    0 & 1
                  \end{array}\right] \quad ; \quad Q^{[n]} = \left[\begin{array}{cc}
                    \cos^2(\theta_n) & \sin^2(\theta_n) \\
                    1 & 0
                  \end{array}\right]
$$
The associated hidden partial isometry is given by
\[
V_{H}^{[n]}(e_1) =  |\cos(\theta_n)|e_1\otimes e_1 + |\sin(\theta_n)|e_1\otimes e_2 \quad; \quad  V_{H}^{[n]}(e_2) = e_2
\]
and the associated observation partial isometry is given by
\[
V_{O}^{[n]}(e_1) =  |\cos(\theta_n)|e_1\otimes |1\rangle + |\sin(\theta_n)|e_1\otimes |2\rangle \quad; \quad  V_{O}^{[n]}(e_2) = e_1
\]
\end{example}
The HMM \eqref{eq_PiQd} emerging from the MPS in Theorem \ref{thm_main2} coincides with the one in Theorem \ref{thm_main1} whenever the tensors satisfy the decomposition \eqref{eq_aij}. As we will see in the next section for GHZ and cluster states, this condition holds, ensuring a consistent structure.
However, in the case of AKLT, the decomposition \eqref{eq_aij} is not applicable, leading to a different HMM. This distinction reflects deeper structural differences in the associated processes and highlights new aspects of correlations and entanglement beyond standard representations.

\section{Application to Notable Matrix Product States}

In this section, we apply the previously established results to prominent examples of MPS, illustrating their relevance and implications. By analyzing well-known cases from the literature, we highlight how our framework naturally encompasses these states and reveals new structural insights.

\subsection{GHZ State Representation}

The Greenberger-Horne-Zeilinger (GHZ) state for an $N$-qubit system is formally expressed as:
\[
\big|{\text{GHZ}}\big\rangle_N = \frac{1}{\sqrt{2}} \left( \big|{0}\big\rangle^{\otimes N} +  \big|{1}\big\rangle^{\otimes N} \right).
\]

To formulate this state as an MPS under periodic boundary conditions (PBC), we define the site-dependent matrices $A^{[n]}_{k}$ as follows:
\[
A^{[1]}_{0} = \frac{1}{\sqrt{2}}
\begin{pmatrix}
1 & 0 \\
0 & 0
\end{pmatrix}, \quad
A^{[1]}_{1} = \frac{1}{\sqrt{2}}
\begin{pmatrix}
0 & 0 \\
0 & 1
\end{pmatrix}
\]
\[
A^{[n]}_{0} =
\begin{pmatrix}
1 & 0 \\
0 & 0
\end{pmatrix}, \quad
A^{[n]}_{1} =
\begin{pmatrix}
0 & 0 \\
0 & 1
\end{pmatrix}, \quad \forall n\geq 2.
\]

The resulting MPS form of the GHZ state can be expressed as:
\[
\big|{\text{GHZ}}\big\rangle_{N} = \frac{1}{\sqrt{2}}\sum_{k_1, \dots, k_N}\Tr\left( A^{[1]}_{k_1} A^{[2]}_{k_2} \dots A^{[n]}_{k_N} \right) \big|{k_1, \dots, k_N}\big\rangle,
\]
where $\Tr$ denotes the trace operation.

Defining the hidden and observable Hilbert spaces as $\mathcal{H} = \mathcal{K} = \mathbb{C}^2$ with the standard basis:
\[
e_0 = |0\rangle = \begin{pmatrix} 1 \\ 0 \end{pmatrix}, \quad e_1 =  |1\rangle = \begin{pmatrix} 0 \\ 1 \end{pmatrix},
\]

the corresponding translation-invariant  EHMM associated with the GHZ state, as per Theorem~\ref{thm_main1}, is governed by the hidden partial isometry $V_{H}: e_i \mapsto e_i\otimes e_i, \; i=0,1$ and the observation operator $V_O: e_i \mapsto e_i \otimes |i\rangle$.

\begin{figure}[h!]
\centering
\begin{tikzpicture}[node distance=1.5cm, every node/.style={scale=1.3}]

  \node[draw, circle, fill=blue!20, text=black, thick] (H0) at (-3,0) {\(e_0\)};
  \node[draw, circle, fill=blue!20, text=black, thick] (H1) at (3,0) {\(e_1\)};

  \node[draw, circle, fill=green!20, text=black, thick] (O0) at (-3,-3) {\(|0\rangle\)};
  \node[draw, circle, fill=green!20, text=black, thick] (O1) at (3,-3) {\(|1\rangle\)};

  \draw[->, thick, blue] (H0) to[loop left] node[midway, left, blue] {$1$} (H0);
  \draw[->, thick, blue] (H1) to[loop right] node[midway, right, blue] {$1$} (H1);

  \draw[->, thick, orange] (H0) -- node[midway, left, orange] {$1$} (O0);
  \draw[->, thick, orange] (H1) -- node[midway, right, orange] {$1$} (O1);

\end{tikzpicture}
\caption{Hidden Markov Model (HMM) for the GHZ state with an identity transition matrix \( \Pi = I_2 \) and identity observation matrix \( Q = I_2 \). Each hidden state remains unchanged (\( e_0 \) stays as \( e_0 \), and \( e_1 \) stays as \( e_1 \)), and each deterministically maps to its corresponding observable state (\( e_0 \rightarrow |0\rangle \), \( e_1 \rightarrow |1\rangle \)).}
\label{fig:ghz_hmm}
\end{figure}
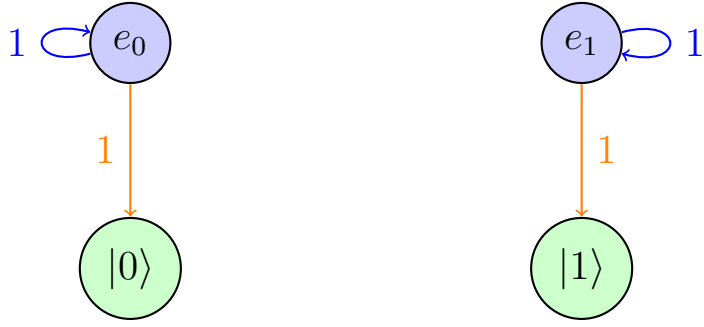

Consequently, the EHMM is generated by the vector:
\begin{equation}\label{eq-HMM-GHZ}
 \big|\Psi_{H,O;n}\big\rangle = \sum_{i}\frac{1}{\sqrt{2}}\, e_i^{\otimes(n+1)}\otimes |i\rangle^{\otimes n},
\end{equation}
for an initial probability distribution given by $\pi = \begin{pmatrix} \frac{1}{2}  &  \frac{1}{2} \end{pmatrix}$.

Notably, the MPS formulation of the GHZ state satisfies the gauge condition \eqref{eq_sumAkAkd1}, and the resulting EHMM aligns precisely with \eqref{eq-HMM-GHZ}, demonstrating a clear correspondence between the GHZ state and the EHMM framework as characterized by Theorem \ref{thm_main2}.

\subsection{Cluster State Representation}

The one-dimensional (1D) cluster state is the unique ground state of the stabilizer Hamiltonian:
\begin{equation}
    H = -\sum_{i=1}^{N-1} \sigma^z_{i-1} \sigma^x_i \sigma^z_{i+1},
\end{equation}
where the Pauli matrices are explicitly given by:
\begin{equation}
    \sigma^x = \begin{bmatrix} 0 & 1 \\ 1 & 0 \end{bmatrix}, \quad
    \sigma^z = \begin{bmatrix} 1 & 0 \\ 0 & -1 \end{bmatrix}.
\end{equation}
We define both the hidden and observable Hilbert spaces as $\mathcal{H} = \mathcal{K} = \mathbb{C}^2$, with the standard computational basis:
\[
e_0 = |0\rangle = \begin{pmatrix} 1 \\ 0 \end{pmatrix}, \quad
e_1 = |1\rangle = \begin{pmatrix} 0 \\ 1 \end{pmatrix}.
\]

The cluster state can be described in the MPS formalism with the following site-independent tensors:
\begin{equation}
    A_0^{[n]} = \frac{1}{\sqrt{2}} \begin{bmatrix} 1 & 1 \\ 0 & 0 \end{bmatrix}, \quad
    A_1^{[n]} = \frac{1}{\sqrt{2}} \begin{bmatrix} 0 & 0 \\ 1 & -1 \end{bmatrix}.
\end{equation}
This representation satisfies the gauge condition (\ref{eq_SAkA1}), meaning that the map:
\begin{equation}
    M \mapsto A_0^{[n]} M A_0^{[n]} + A_1^{[n]} M A_1^{[n]}
\end{equation}
defines a   Markov operator on the space of $2 \times 2$ matrices, $\mathcal{M}_2(\mathbb{C})$.

One can verify that this tensor representation conforms to the decomposition (\ref{eq_aij}) using the matrices:
\begin{equation}
U =  \frac{1}{\sqrt{2}} \begin{bmatrix} 1 & 1 \\ -1 & 1 \end{bmatrix}, \qquad
\chi = \mathbb{I}_2 = \begin{bmatrix} 1 & 0 \\ 0 & 1 \end{bmatrix}.
\end{equation}

The matrix $U$ in the decomposition is precisely the Hadamard gate,
which plays a crucial role in quantum information theory by mapping computational basis states to equal superpositions. Here, it governs the transformation of the hidden Markov process.

\begin{figure}[h!]
\centering
\begin{tikzpicture}[node distance=1.5cm, every node/.style={scale=1.3}]

  \node[draw, circle, fill=blue!20, text=black, thick] (H0) at (-3,0) {\(e_0\)};
  \node[draw, circle, fill=blue!20, text=black, thick] (H1) at (3,0) {\(e_1\)};

  \node[draw, circle, fill=green!20, text=black, thick] (O0) at (-3,-3) {\(|0\rangle\)};
  \node[draw, circle, fill=green!20, text=black, thick] (O1) at (3,-3) {\(|1\rangle\)};

  \draw[->, thick, blue] (H0) to[bend left=30] node[midway, above, blue] {$\frac{1}{2}$} (H1);
  \draw[->, thick, blue] (H1) to[bend left=30] node[midway, below, blue] {$\frac{1}{2}$} (H0);
  \draw[->, thick, blue] (H0) to[loop left] node[midway, left, blue] {$\frac{1}{2}$} (H0);
  \draw[->, thick, blue] (H1) to[loop right] node[midway, right, blue] {$\frac{1}{2}$} (H1);

  \draw[->, thick, orange] (H0) -- node[midway, left, orange] {$1$} (O0);
  \draw[->, thick, orange] (H1) -- node[midway, right, orange] {$1$} (O1);

\end{tikzpicture}
\caption{Hidden Markov Model (HMM) for the 1D cluster state with  hidden transition matrix  $\Pi =  \frac{1}{2} \begin{bmatrix} 1 & 1 \\  1 & 1 \end{bmatrix}$ between the two states \( e_0 \) and \( e_1 \) and an identity emission matrix \( Q = I_2 \). Each hidden state deterministically emits a corresponding observable state, meaning \( e_0 \) always maps to \( |0\rangle \) and \( e_1 \) always maps to \( |1\rangle \). The transitions between hidden states occur with equal probability \( \frac{1}{2} \).}
\label{fig:cluster_hmm_2states}
\end{figure}
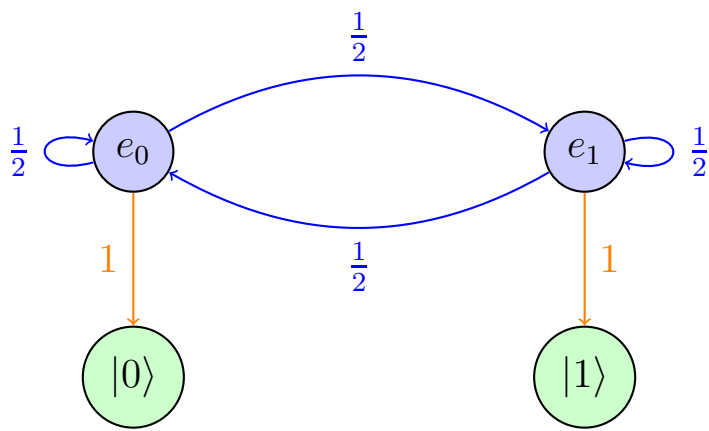

More concretely, the hidden state dynamics are regulated by the Hadamard gate through the partial isometry:
\begin{equation}
    V_H(e_i) = \frac{1}{\sqrt{2}} \big(e_i \otimes e_i + (-1)^i e_i \otimes e_{1-i}\big), \quad i = 0,1.
\end{equation}
This transformation ensures that the hidden process evolves in a manner consistent with the underlying cluster state entanglement structure.

Simultaneously, the observable emission process is governed by another partial isometry $V_O$, which maps:
\begin{equation}
    V_O: e_i \mapsto e_i \otimes |i\rangle.
\end{equation}
This guarantees that the emission probabilities align with the quantum measurement outcomes.

Moreover, the  EHMM  constructed from this framework aligns precisely with the structure described in Theorem \ref{thm_main2}. The interplay between the Hadamard gate and the cluster state structure highlights the non-trivial hidden dynamics, where entanglement and local transformations are encoded through the EHMM formalism.

\subsection{ The AKLT model and its HMM Representation}\label{sect-AKLT}

The AKLT state (Affleck-Kennedy-Lieb-Tasaki) is the  ground state of the AKLT Hamiltonian on a one-dimensional spin-1 chain. This Hamiltonian is defined as:

\[
\hat{H} = \sum_{\langle i,j \rangle} \left( \vec{S}_i \cdot \vec{S}_j + \frac{1}{3} (\vec{S}_i \cdot \vec{S}_j)^2 \right)
\]

where \( \vec{S}_i = (S_i^x, S_i^y, S_i^z) \) are spin-1 operators at site \( i \), and the sum runs over nearest-neighbor pairs \( \langle i,j \rangle \)  satisfying the commutation relations of the \( \mathfrak{su}(2) \) Lie algebra:

\[
[S^x, S^y] = i S^z, \quad [S^y, S^z] = i S^x, \quad [S^z, S^x] = i S^y
\]

 Consider the hidden Hilbert space $\mathcal{H} = \mathbb{C}^2$ with orthonormal basis $e_1 = \begin{bmatrix}
1 \\
 0
\end{bmatrix}, \, e_2 = \begin{bmatrix}
0 \\
 1
\end{bmatrix}$ and the  observation Hilbert space $\mathcal{K} = \mathbb{C}^3$ with  orthonormal spin basis
 \( |+\rangle, |0\rangle, |-\rangle \) correspond to the matrices:
\begin{equation}\label{eq_AKLT}
A^{[n]}_{+} = \sqrt{\frac{2}{3}} \, \sigma^{+}, \quad A^{[n]}_{0} = \sqrt{\frac{1}{3}} \, \sigma^{z}, \quad A^{[n]}_{-} = -\sqrt{\frac{2}{3}} \, \sigma^{-}
\end{equation}
where \( \sigma^{+} \), \( \sigma^{z} \), and \( \sigma^{-} \) are the Pauli matrices:
\[
\sigma^{+} = \begin{pmatrix} 0 & 1 \\ 0 & 0 \end{pmatrix}, \quad \sigma^{z} = \begin{pmatrix} 1 & 0 \\ 0 & -1 \end{pmatrix}, \quad \sigma^{-} = \begin{pmatrix} 0 & 0 \\ 1 & 0 \end{pmatrix}
\]
 The associated state is denoted as:
\[
|\text{AKLT}\rangle_N = \sum_{k_1,\dots,k_n} \text{Tr}\left(A_{k_1}^{[1]} A_{k_2}^{[2]} \cdots A_{k_N}^{[N]}\right) |k_1 k_2 \cdots k_N\rangle,
\]

\begin{proposition}
    The AKLT matrix product state cannot be realized as a partial observation of an  EHMM as described in Theorem \ref{thm_main1}.
\end{proposition}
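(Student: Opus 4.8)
The plan is to exploit the rigid structural form that Theorem~\ref{thm_main1} imposes on the generating tensors. Realizability of a state $\sum_{k_1,\dots,k_N}\Tr(A_{k_1}^{[1]}\cdots A_{k_N}^{[N]})\,|k_1\cdots k_N\rangle$ as a partial observation of an EHMM forces, via \eqref{eq_aij}, the factorization $a^{[n]}_{k;ij}=U^{[n]}_{ij}\,\chi^{[n]}_i(k)$ with $U^{[n]}$ unitary. Setting $D^{[n]}_k=\mathrm{diag}(\chi^{[n]}_1(k),\dots,\chi^{[n]}_m(k))$, this is equivalent to demanding that every generating tensor at a fixed site decompose as $A^{[n]}_k=D^{[n]}_k\,U^{[n]}$, i.e.\ that all the $A^{[n]}_k$ be obtained from one common unitary by left multiplication by diagonal matrices. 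First I would reduce the proposition to showing that the AKLT tensors \eqref{eq_AKLT} admit no such decomposition, even after allowing the gauge freedom $A_k\mapsto g A_k g^{-1}$ inherent in the MPS representation.

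The direct obstruction is then transparent. If $A_k=D_k U$ for all $k$ and some $A_{k_0}$ is invertible (hence $D_{k_0}$ is), then $A_k A_{k_0}^{-1}=D_k D_{k_0}^{-1}$ is diagonal for every $k$. For the AKLT data, $A_0=\sqrt{1/3}\,\sigma^{z}$ is invertible, yet a one-line computation gives $A_{+}A_0^{-1}=-\sqrt{2}\,\sigma^{+}$ and $A_{-}A_0^{-1}=-\sqrt{2}\,\sigma^{-}$, neither of which is diagonal. Equivalently, one may argue at the level of the underlying probabilities: the array $|a_{k;1j}|^2$ indexed by $(k,j)$ (first hidden row) has nonzero entries only at $(+,2)$ and $(0,1)$, so it cannot be written as a product $Q_1(k)\,\Pi_{1j}$, contradicting the relation $|a_{k;ij}|^2=\Pi_{ij}Q_i(k)$ that \eqref{eq_aij} would entail. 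This already establishes that the \emph{given} representation \eqref{eq_AKLT} is not of EHMM type.

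The hard part is upgrading this to the stated claim about the \emph{state}, since an MPS representation of a state is far from unique. I would dispose of the gauge freedom by the same nilpotency observation: if some similarity $B_k=gA_k g^{-1}$ had the form $D_k U$, then $B_{+}B_0^{-1}=g(A_{+}A_0^{-1})g^{-1}=-\sqrt{2}\,g\sigma^{+}g^{-1}$ would have to be diagonal; but $g\sigma^{+}g^{-1}$ is similar to the nonzero nilpotent $\sigma^{+}$, and a diagonal matrix all of whose eigenvalues vanish is the zero matrix, a contradiction. This rules out \emph{every} bond-dimension-$2$ representation, and an overall scalar rescaling is harmlessly absorbed into $D_k$. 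The one remaining gap — excluding factorizing representations of strictly larger bond dimension — I would close by invoking the rigidity of injective MPS: the AKLT state is injective with minimal bond dimension $2$, so any MPS representation reduces, up to gauge, to \eqref{eq_AKLT}, whence the obstruction above applies. I expect this final reduction, rather than the elementary non-factorization, to be the genuine technical point.
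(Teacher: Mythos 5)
Your proposal is correct in its core and strictly more ambitious than the paper's own proof. What the paper actually does is exactly your ``direct obstruction,'' phrased entrywise: assuming $a_{k;ij}=U_{ij}\chi_i(k)$ for the tensors \eqref{eq_AKLT}, it extracts the three equations $U_{11}\chi_1(+)=0$, $U_{12}\chi_1(+)=\sqrt{2/3}$, $U_{11}\chi_1(0)=\sqrt{1/3}$, notes the second forces $\chi_1(+)\neq 0$, hence $U_{11}=0$ by the first, contradicting the third --- and stops there. So the paper rules out the factorization \eqref{eq_aij} only for the \emph{specific} representation \eqref{eq_AKLT}; it never confronts gauge freedom or alternative bond dimensions, i.e.\ it proves the proposition under the narrow reading ``these tensors do not factor.'' Your reformulation $A_k=D_kU$, with the consequence that $A_kA_0^{-1}=D_kD_0^{-1}$ must be diagonal (and your computation $A_{+}A_0^{-1}=-\sqrt{2}\,\sigma^{+}$ is right), is equivalent in content to the paper's three equations but cleaner, and it is precisely what lets you go further: similarity-invariance of nilpotency kills every bond-dimension-$2$ gauge $A_k\mapsto gA_kg^{-1}$, a case the paper simply does not treat. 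Two caveats on your last step, which you rightly flag as the genuine technical point. First, it is invoked rather than proven, and it is not quite off-the-shelf: the AKLT tensor is not injective on a single site (the three matrices span only the traceless $2\times 2$ matrices, so the boundary map kills $\id$), so injectivity holds only after blocking; moreover the fundamental theorem of MPS is usually stated for translation-invariant representations, whereas Theorem \ref{thm_main1} permits site-dependent $U^{[n]},\chi^{[n]}$, so the uniqueness statement you need requires some adaptation. Second, judged against the paper you are not deficient --- you prove strictly more than the paper's proof does, and you correctly identify the residual gap that the paper's argument leaves open without acknowledging it.
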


\begin{proof}
    Assume, for the sake of contradiction, that such a decomposition exists. In this case, the AKLT matrices given by (\ref{eq_AKLT}) would satisfy the representation
    \[
        a_{k;ij} = U_{ij} \chi_i(k),
    \]
    for all \( i, j \in \{1, 2\} \) and \( k \in \{+, 0, -\} \), where:
    \begin{itemize}
        \item \( U = (U_{ij})_{1 \leq i, j \leq 2} \) is a unitary matrix, and
        \item \( \chi = (\chi_i(k))_{\substack{1 \leq i \leq 2 \\ k \in \{+, 0, -\}}} \) is a matrix such that \( Q = \big(|\chi_i(k)|^2\big)_{\substack{1 \leq i \leq 2 \\ k \in \{+, 0, -\}}} \) is stochastic, i.e., all entries of \( Q \) are non-negative, and each row sums to 1.
    \end{itemize}

    Substituting specific values for \( k \), \( i \), and \( j \), we derive the following conditions:
    \[
    \begin{cases}
        U_{11} \chi_1(+) = 0, & \text{for } k = +, \, i = j = 1, \\
        U_{12} \chi_1(+) = \sqrt{\frac{2}{3}}, & \text{for } k = +, \, i = 1, \, j = 2, \\
        U_{11} \chi_1(0) = \sqrt{\frac{1}{3}}, & \text{for } k = 0, \, i = j = 1.
    \end{cases}
    \]

    From the first and the second equations, it follows that \( U_{11} = 0 \). However, this directly contradicts the third equation, which requires \( U_{11} \neq 0 \) to satisfy \( U_{11} \chi_1(0) = \sqrt{\frac{1}{3}} \).

    This contradiction implies that such a decomposition is impossible. Consequently, the AKLT matrices given by (\ref{eq_AKLT}) cannot satisfy the representation in (\ref{eq_aij}). This completes the proof.
\end{proof}
Although the AKLT model cannot be directly derived from an  EHMM under the conditions of Theorem \ref{thm_main1}, the matrices (\ref{eq_AKLT}) satisfy the gauge condition \ref{eq_SAkA1}. Thus, the assumptions of Theorem \ref{thm_main2} hold, allowing the construction of an EHMM associated with the AKLT model, characterized by the homogeneous transition and emission matrices:

\begin{equation}\label{eq_PQ_AKLT}
\Pi = \begin{pmatrix} \frac{1}{3} & \frac{2}{3} \\ \\ \frac{2}{3} & \frac{1}{3} \end{pmatrix}, \quad
Q = \begin{pmatrix} \frac{2}{3} & \frac{1}{3} & 0 \\ \\ 0 & \frac{1}{3} & \frac{2}{3} \end{pmatrix}.
\end{equation}

Here, \(\Pi\) represents the transition probabilities between hidden states, while \(Q\) defines the emission probabilities. The dynamics of this HMM are illustrated in the following diagram:
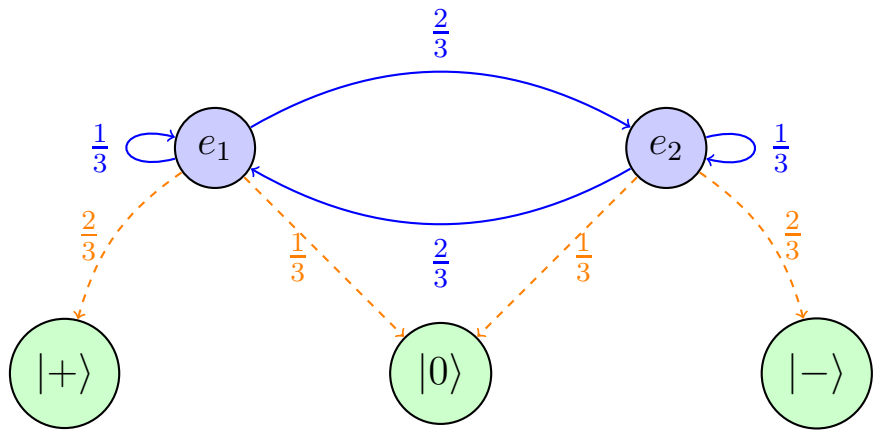
\begin{figure}[h!]
\centering
\begin{tikzpicture}[node distance=1.5cm, every node/.style={scale=1.3}]

  \node[draw, circle, fill=blue!20, text=black, thick] (H1) at (-3,0) {\(e_1\)};
  \node[draw, circle, fill=blue!20, text=black, thick] (H2) at (3,0) {\(e_2\)};

  \node[draw, circle, fill=green!20, text=black, thick] (O1) at (-5,-3) {\(|+\rangle\)};
  \node[draw, circle, fill=green!20, text=black, thick] (O2) at (0,-3) {\(|0\rangle\)};
  \node[draw, circle, fill=green!20, text=black, thick] (O3) at (5,-3) {\(|-\rangle\)};

  \draw[->, thick, blue] (H1) to[bend left=30] node[midway, above, blue] {$\frac{2}{3}$} (H2);
  \draw[->, thick, blue] (H2) to[bend left=30] node[midway, below, blue] {$\frac{2}{3}$} (H1);
  \draw[->, thick, blue] (H1) to[loop left] node[midway, left, blue] {$\frac{1}{3}$} (H1);
  \draw[->, thick, blue] (H2) to[loop right] node[midway, right, blue] {$\frac{1}{3}$} (H2);

  \draw[->, thick, orange, dashed] (H1) to[bend right=20] node[midway, left, orange] {$\frac{2}{3}$} (O1);
  \draw[->, thick, orange, dashed] (H1) -- node[midway, left, orange] {$\frac{1}{3}$} (O2);

  \draw[->, thick, orange, dashed] (H2) -- node[midway, right, orange] {$\frac{1}{3}$} (O2);
  \draw[->, thick, orange, dashed] (H2) to[bend left=20] node[midway, right, orange] {$\frac{2}{3}$} (O3);

\end{tikzpicture}
\caption{This diagram represents the HMM for an AKLT state. The hidden states \( e_1\) and \(e_2\) transition with probabilities as described by the matrix \(\Pi\), shown in blue to match the hidden states. The observations \(|+\rangle\), \(|0\rangle\), and \(|-\rangle\) are emitted with probabilities given by the matrix \(Q\), highlighted in orange for distinction.  }
\label{fig:aklt_hmm_colored}
\end{figure}


We proceed to derive a new MPS originating from the EHMM described by the matrices (\ref{eq_PQ_AKLT}), as outlined in Theorem \ref{thm_main1}. To this end, let us consider the following matrices:

\[
U = \begin{pmatrix}
\sqrt{\frac{1}{3}} & \sqrt{\frac{2}{3}} \\
\\
-\sqrt{\frac{2}{3}} & \sqrt{\frac{1}{3}}
\end{pmatrix}, \quad
\chi = \begin{pmatrix}
\sqrt{\frac{2}{3}} & \sqrt{\frac{1}{3}} & 0 \\
\\
0 & \sqrt{\frac{1}{3}} & -\sqrt{\frac{2}{3}}
\end{pmatrix}.
\]

The matrix \( U \) is orthogonal, and its entries satisfy \( |U_{ij}|^2 = \Pi_{ij} \), while \( |\chi_{i}(k)|^2 = Q_i(k) \) for all indices \( i \), \( j \), and \( k \). Consequently, the relevant matrices corresponding to (\ref{eq_aij}), with entries \( a^{'[n]}_{k; ij} = U^{[n]}_{ij} \chi^{[n]}_{i}(k) \), take the following form:

\begin{equation}\label{eq_newAKLT}
A^{'[n]}_{+} = \frac{1}{3}
\begin{pmatrix}
\sqrt{2} & 2 \\
\\
0 & 0
\end{pmatrix}, \quad
A^{'[n]}_{0} = \frac{1}{3}
\begin{pmatrix}
1 & \sqrt{2} \\
\\
-\sqrt{2} & 1
\end{pmatrix}, \quad
A^{'[n]}_{-} = \frac{1}{3}
\begin{pmatrix}
0 & 0 \\
\\
2 & -\sqrt{2}
\end{pmatrix}.
\end{equation}

A natural question arises as to whether the matrices \( \{A_k^{'[n]}\} \) provide an alternative representation of the AKLT state. However, it is clear that this is not the case. The MPS generated by the newly defined matrices (\ref{eq_newAKLT}) is distinct from the AKLT state. Investigating its relationship with the original state and its connection to other similar states arising from different choices of the matrices \( U \) and \( \chi \) is an intriguing direction for further exploration.

\end{document}